\newtheorem{theorem}{Theorem}
\newtheorem{claim}{Claim}
\newtheorem{lemma}[theorem]{Lemma}
\newtheorem{corollary}[theorem]{Corollary}
\newtheorem{remark}{Remark}
\newtheorem{definition}{Definition}
\newcommand{\suppress}[1]{}
\newcommand{\comts}[1]{}
\newcommand{\bl}{n}            % block length of codes
\newcommand{\spc}{{\cal S}}
\newcommand{\net}{{\cal N}}
\newcommand{\bX}{{\bf X}}
\newcommand{\cA}{{\cal A}}
\newcommand{\F}{F}
\newcommand{\cN}{\mathcal{N}}
\newcommand{\iden}{I}
\newcommand{\bR}{{\bf R}}
\newcommand{\cC}{\mathcal{C}}
\newcommand{\cP}{\mathcal{P}}
\begin{document}

\title{On network coding for sum-networks}
\author{Brijesh~Kumar~Rai,~\IEEEmembership{Member,~IEEE}, and Bikash~Kumar~Dey,~\IEEEmembership{Member,~IEEE}
\thanks{The material in this paper was presented in parts
at the 2009 Workshop on Network Coding, Theory and Applications, Lausanne, Switzerland, June 2009 and at the 2009 IEEE International Symposium on Information Theory, Seoul, Korea, June 2009.}
\thanks{B.~K.~Rai was with the Department of Electrical Engineering, Indian
Institute of Technology Bombay, India. He is now with the Department
of
Electronics and Electrical Engineering, Indian Institute of Technology
Guwahati, India (email: bkrai@iitg.ernet.in). B.~K.~Dey is with the
Department of Electrical Engineering, Indian Institute of Technology
Bombay, India (email: bikash@ee.iitb.ac.in).}
\date{\today}
}

\maketitle

%\makeatletter{\renewcommand*{\@makefnmark}{}
%\footnotetext{
%Manuscript received ..., revised ..., accepted .... The date of current version ....
%The material in this paper was presented in parts
%at the 2009 Workshop on Network Coding, Theory and Applications, Lausanne, Switzerland, June 2009 and at the 2009 IEEE International Symposium on Information Theory, Seoul, Korea, June 2009.
%
%B.~K.~Rai was with the Department of Electrical Engineering, Indian
%Institute of Technology Bombay, India. He is now with the Department
%of
%Electronics and Electrical Engineering, Indian Institute of Technology
%Guwahati, India (email: bkrai@iitg.ernet.in). B.~K.~Dey is with the
%Department of Electrical Engineering, Indian Institute of Technology
%Bombay, India (email: bikash@ee.iitb.ac.in).
%
%
%Communicated by M. Gastpar, Associate Editor for Shannon Theory.
%
%Digital Object Identifier 
%}\makeatother}

\thispagestyle{empty}

\begin{abstract}
A directed acyclic network is considered where all the terminals need to
recover the sum of the symbols generated at all the sources. We call such a network 
a \textit{sum-network}. It is shown that there exists a solvably (and linear 
solvably) equivalent sum-network for any multiple-unicast network,
and thus for any directed acyclic communication network. It is also
shown that there exists a linear solvably equivalent multiple-unicast
network for every sum-network. 
It is shown that for any
set of polynomials
having integer coefficients, there exists a sum-network which is scalar 
linear solvable over a finite field $F$ if and only if the polynomials have 
a common root in $F$. For any finite or cofinite set of prime numbers,
a network is constructed which has a vector linear solution of any length
if and only if the characteristic of the alphabet field is in the given
set. The insufficiency of linear network coding 
and unachievability of the network coding capacity are proved for sum-networks
by using similar known results for communication networks.
Under fractional vector linear network coding, a sum-network and its
reverse network are shown to be equivalent. However, under non-linear
coding, it is shown that there exists a solvable sum-network whose reverse
network is not solvable.
\end{abstract}
\begin{keywords}
Network coding, function computation, solvability, polynomial equations, linear network coding.
\end{keywords}

\section{Introduction}
\label{sec:intro}
\PARstart{T}{raditionally} transmission of information has been considered similar
to flow of commodity so that information is stored and forwarded
by the intermediate nodes. The seminal work by Ahlswede
et al. (\cite{ahlswede1}) showed that mixing or coding information
at the intermediate nodes in a network may provide better throughput. 
Coding at the intermediate nodes has since been referred as network coding.
For a multicast network, where the symbols generated at a single source
are to be communicated to a set of terminals, it was shown that the capacity
under network coding is the minimum of the min-cuts between the source
and each of the terminals~(\cite{ahlswede1}).

If the nodes perform linear operations,
then the code is called a linear network code.
It was shown by Li et al. (\cite{li1})
that linear network coding is sufficient
to achieve the capacity of a multicast network.
Koetter and M\'{e}dard (\cite{koetter2}) proposed an algebraic
formulation of the linear network code design problem and related
this network coding problem to finding roots of a set of polynomials.
Jaggi et al. (\cite{Jaggi1}) gave
a polynomial time algorithm for solving this problem
for a multicast network. Ho et al.~(\cite{ho1})
showed that even when the local coding coefficients
are chosen randomly and in a distributed fashion, the multicast
capacity can be achieved with high probability.

More general forms of network codes have been considered in the literature
to allow more flexibility. If a block or vector of $n$ symbols is treated
as a single symbol at all the nodes and edges, then a node
takes a function of the incoming vectors to construct an outgoing
vector on an edge. In each session,
the receivers recover $n$-length vectors of symbols from their desired
sources. Such a code is called a {\it vector network code}.
{\em Fractional network code} is an even more general form of network
code where the number of source symbols $k$ encoded in a block is possibly different
from the block length $n$. Such a $(k,n)$ code communicates $k$ source symbols
in $n$ use of the network, thus achieving the rate $k/n$ per network
use. 
\suppress{
\begin{definition}
A rate $r$ is said to be {\em achievable} under a class of network
codes if there exists a $(k,n)$ fractional network coding solution in that
class such that $k/n \geq r.$
\end{definition}
\begin{definition}
The {\it coding capacity} of a network with respect to
a class of network codes is defined to be the supremum
of the achievable rates under that class of network codes.
The term {\em coding capacity} refers to the capacity under
the class of all network codes.
When restricted to the class of linear network codes, the corresponding
capacity is called the {\em linear coding capacity}.
Dougherty et al. (\cite{dougherty4}) showed that
there exists an instance of network coding problem where network coding
capacity is unachievable.
\end{definition}
}

\pubidadjcol
A network is said to be solvable over an alphabet if there is a
rate-$1$ network code over that alphabet which satisfies the demands of the
terminals. For an underlying alphabet field,
a network is said to be $k$-length vector linear solvable
(resp. scalar linear solvable) if there is a $k$-length
vector (resp. scalar) linear network coding solution for the network.
Two networks $\net_1$ and $\net_2$ are said to be solvably equivalent
(resp. linear solvably equivalent) if $\net_1$ is solvable (resp.
solvable with linear codes) over some finite alphabet if and only if $\net_2$
is solvable (resp. solvable with linear codes) over the same alphabet.

Much of the subsequent work considered more general
networks than multicast networks. Here a terminal requires the symbols
generated at a subset of the sources.
We call such networks
where the aim is to communicate the source symbols to the appropriate
terminals, as opposed to computation of functions of the source
symbols at the terminals (which is the subject of this paper,)
as {\it communication networks}.
Designing a network code
for a general communication network was shown to be an NP-hard 
problem~(\cite{lehman, langberg2}). It was shown in \cite{medard,riis}
that for communication
networks, scalar linear network coding is not sufficient in the sense
that there may exist a vector linear network code though no scalar linear
network code exists. Even vector linear network codes were
shown to be insufficient by Dougherty et al.~(\cite{dougherty1}).
They further showed in \cite{dougherty2} that for every set of
integer polynomial equations, there exists a directed acyclic network
which is scalar linear solvable over a finite field if and only if the set of
polynomial equations has a solution over the same finite field.
\label{page:communication_network}

%However, in many applications
%as in sensor networks, a terminal node may want to receive a function
%of the source symbols. In this paper, we consider a special case
%of such problems where the sum of the source symbols is required
%to be communicated to a set of terminals.

\label{page:MUN}
A network with some source-terminal pairs where every terminal
wants the data generated at the respective source is called
a multiple-unicast network. It was shown in \cite{dougherty3} that from
any communication network, one can construct
a solvably equivalent multiple-unicast network.
For a multiple-unicast network, the {\em reverse network}
is obtained by reversing the direction of the edges and interchanging
the roles of the sources and the terminals. 
It was shown in  \cite{koetter3,riis1} that a multiple-unicast
network is linear solvable if and only if its reverse
network is linearly solvable.
However, it was shown in \cite{riis1,dougherty3} that there exists
a multiple-unicast network which is solvable by a non-linear network
code though its reverse network is not solvable.

In this paper, we consider a network with some sources $s_1,s_2,\cdots,s_m$ 
and some terminals $t_1,t_2,\cdots, t_n$
where each terminal requires the sum $X_1+X_2+\cdots + X_m$ of the symbols
$X_1,X_2,\cdots, X_m$ generated at the sources. We call such a network a {\it sum-network}.
In general, the problem of communicating a function
of the source symbols to the terminals is of considerable interest,
for example, in sensor networks. Our aim in this paper is to
investigate a simple but illustrative case of this problem.
Computation of the `sum' function is also of practical interest, and it
has been considered in various setups
in \cite{korner1,krithivasan1,gallager2,ramamoorthy}.
In \cite{krithivasan1}, a source coding technique was proposed for arbitrary
functions using codes for communicating linear functions. 
We consider the problem in the setup first considered by
Ramamoorthy~(\cite{ramamoorthy}).
He showed that if the number of sources or the number of terminals in
the network is at most two, then all the terminals
can compute the sum of the source symbols available at the sources using
scalar linear network coding if and only if every source node is connected
to every terminal node. There have been some work in parallel to the present
work. Langberg et al.~(\cite{langberg3}) showed that for a directed
acyclic sum-network having $3$ sources and $3$ terminals and every
source connected with every terminal by at least two distinct paths,
it is possible to communicate the sum of the sources using scalar
linear network coding. Appuswami et al.~(\cite{appuswamy1, appuswamy2})
considered the problem of communicating more general functions, for
example, arithmetic sum (unlike modulo sum as in a finite field), to
one terminal. It is worth mentioning some recent work which have
been published or become available during the review process of this paper.
A simpler proof of the sufficiency of $2$-connectedness for the solvability
of a sum-network with $3$ sources and $3$ terminals is presented in
\cite{langberg4,langberg5}.
A necessary and sufficient condition and also many simpler sufficient
conditions for solvability of sum-networks with $3$ sources and
$3$ terminals are presented by Shenvi and Dey~(\cite{shenvi4,shenvi3}).
Appuswami et al.~(\cite{appuswamy3}) presented cut-based bounds on the capacity
of computing arbitrary functions over arbitrary networks and studied
the tightness of these bounds.

The problem of distributed function computation, also 
known as in-network computation, has been of significant interest,
and the problem has been addressed in various other contexts
and setups in the past.
There are information theoretic works with the aim of characterizing
the achievable rate-region of various small and simple networks with
no intermediate nodes, but with correlated
sources~(\cite{korner1,han1,orlitsky1,feng1,krithivasan1}).
A large body of work exists for finding the asymptotic scaling laws for
communication requirements in computing functions over large networks
(see, for example, \cite{tsistsiklis,gallager2,giridhar,kanoria,boyd}).

\suppress{
\begin{definition}
A directed network with some sources and some terminals
where each source generates possibly multiple
independent random processes and each terminal requires to recover
a function of the source random processes is called a {\bf Type II}
network.
\end{definition}
Though Type II network is defined here for completeness, we will
mostly deal with simpler Type II networks in this paper.
\begin{definition}
A Type II network where every source generates one random process
and all the terminals require to recover the same function of the
random processes is called a {\bf Type IIA network}.
\end{definition}
As a further special case, we will consider a network where the sources
generate random processes over a module $M$ over a commutative ring $R$,
and the terminals
demand a linear function of the processes of the form
\begin{eqnarray}
f(X_1,X_2,\ldots,X_m) = a_1X_1 + a_2X_2 +\ldots + a_mX_m , \nonumber
\end{eqnarray}
where $a_1,a_2, \ldots, a_m \in R$.
Such a network will be called a {\bf linear-network}.
The special case of $a_1=a_2=\ldots = a_m = 1$ when $R$ is a commutative
ring with identity will be called a {\bf sum-network}.
Note that, since any abelian group is a module over the integer ring,
the sum-network is naturally defined over an abelian group.
An abelian group structure is also necessary in the
alphabet for a clean definition of the problem.
}

We assume that the source alphabet is an abelian group so that
the `sum' of the source symbols is well-defined.
If the alphabet has a field structure,
or in general a module structure over
a commutative ring with identity, then linear codes over the underlying field
or ring can be used~(\cite{dougherty1}). For linear network codes, we assume the alphabet
to be a field and mention whenever a result holds for more general
alphabets. For non-linear codes, no structure other than the abelian
group structure is assumed on the alphabet.
We assume that
the links in the network can carry one symbol from the alphabet
per use in an error-free and delay-free manner.

In the following, we outline the contributions and organization of this paper.

\subsection{Contributions and organization of the paper}
In Section~\ref{sec:back}, we present the system model and some definitions.
\begin{itemize}
\item In Section~\ref{con_networks}, for any directed acyclic multiple-unicast network, we construct a sum-network
which is solvable (resp. $k$-length vector linear solvable) if and
only if the original multiple-unicast network is solvable (resp. $k$-length
vector linear solvable), see Theorem~\ref{thm:SN_using_MUN}.
Further, the reverse of the equivalent sum-network is also
solvably equivalent to the reverse of the multiple-unicast network (Theorem~\ref{thm:SN_using_MUN}). We prove that the capacity of the constructed
sum-network is lower bounded by that of the multiple-unicast network
(Theorem~\ref{th:capacity}).
For any given sum-network, we then construct a multiple-unicast
network which is linear solvably equivalent to the given sum-network
(Theorem~\ref{thm:MUL_using_SUM}).

\item In Section~\ref{Nonreversibility}, we show by explicit code construction that if a
sum-network has a $(k,n)$ fractional linear solution, then its reverse network 
also has a $(k,n)$ fractional linear coding solution over the same field
(Lemma~\ref{lem:rev_eq}). This also implies that both the networks have the same
linear coding capacity (Theorem~\ref{th:reverse}).
We show that this equivalence does not hold under non-linear network coding
(Theorem~\ref{lem:nonreversible_sum_network}).

\item In Section~\ref{Equivalence}, we use the constructions in Section~\ref{con_networks} and
similar known results~(\cite{dougherty2}) for communication networks to prove that for any
set of integer polynomials, there exists a sum-network which is scalar
linear solvable over a finite field $F$ if and only if the polynomials
have a common root in $F$ (Theorem~\ref{thm:polynomialcollection_sum_network}). For any finite or cofinite
(whose complement is finite) set of prime numbers, we present
a network so that for any positive integer $k$ and a field $F$
the network is $k$-length vector linear solvable over $F$ if
and only if the characteristic of $F$ belongs to the set (Theorems~\ref{thm:main} and \ref{thm:main_cofinite}).
This is stronger than the corresponding known results~(\cite{dougherty2})
for communication networks. 

\item In Section~\ref{sec:conseq}, using similar known results for communication networks, we
show the insufficiency of linear network codes for sum-networks
(Theorem~\ref{thm:linsufficient_sum_networks1})
and the unachievability of the network coding capacity of sum-networks
(Theorem~\ref{lem:capacity_unachievable_sum_network}).

\end{itemize}
We conclude with a discussion in Section~\ref{disc}.
\suppress{
\subsection{Organization of the paper}

{\bf To be appropriately combined with the previous subsection} 

The paper is organized as following. In Section \ref{sec:back},
we introduce the system model and notations used in this paper.
We discuss the equivalence between linear-networks and sum-networks
in Section \ref{linear function_sumnetworks}.
We present the
constructions of solvably equivalent networks in Section \ref{con_networks}.
The equivalence between a set of integer polynomials and
the sum-networks is shown in Section \ref{Equivalence}. Explicit
construction of sum-networks with given finite or cofinite characteristic
set is also presented in this section.
In Section \ref{Nonreversibility}, we present a code construction 
for the reverse network from a linear code for the original network
and address the relation between solvability and reversibility of a sum-network.
We show the
insufficiency of vector linear network coding for the sum-networks in
Section \ref{Insufficiency}. In Section \ref{Unachievability},
we show that there exists a sum-network whose network
coding capacity is unachievable over any finite alphabet.
The paper is concluded in Section \ref{disc}.
}

\section{System model}\label{sec:back}

A network is represented by a directed acyclic graph $G= (V,E)$, where
$V$ is a finite 
set of nodes and $E \subseteq V \times V$ is the set of edges in the network
. For any edge 
$e=(i,j)\in E$, the node $j$ is called the head of the edge and the node
$i$ is called the 
tail of the edge; and are denoted as $head (e)$ and $tail (e)$ respectively.
For a node $v$,   
$In(v)= \{e \in E \colon head (e) = v \}$ denotes the set of incoming edges
to $v$. Similarly, 
$Out(v)= \{e \in E \colon tail (e) = v \}$ denotes the set of outgoing edges
from $v$. A sequence of 
nodes $(v_1, v_2, \ldots, v_l)$ is called a path 
if $(v_i, v_{i+1}) \in E$ for $i=1,2,\ldots, l-1$. If $e_i = (v_i,v_{i+1}),
\,1\leq i\leq l-1$ denote the edges on the path then the path
is also denoted as the sequence of edges $e_1e_2\ldots e_{l-1}$.

Among the nodes, a set of nodes $S\subseteq V$ are sources and a set of nodes
$T\subseteq V$ are terminals. We assume that a source does not have any
incoming edge. Each source generates a set of random processes over
an {\it alphabet} $\cA$. All the source processes are assumed to be independent,
and each process is assumed to be independent and uniformly distributed
over the alphabet.
In general, each terminal in the network may want to recover  
the symbols generated at a specific set of sources or their functions. 
Each link is assumed to be an error-free and delay-free communication link
which is capable of carrying a symbol from $\cA$ in each use.

A network code is an assignment of an edge function 
for each edge and a decoding function for each terminal.
A $(k,n)$ fractional network code over an alphabet $\cA$ consists of 
an edge function for every edge $e\in E$
\begin{eqnarray}
f_{e} \colon \cA^{k}\rightarrow \cA^{n}, \mbox{if } tail (e) \in S \label{code1},
\end{eqnarray}
\begin{eqnarray}
f_{e} \colon \cA^{n|In(v)|}\rightarrow \cA^{n}, \mbox{if } tail (e)  \notin S \label{code2}
\end{eqnarray}
and a decoding function for every terminal $v$ 
\begin{eqnarray}
g_{v} \colon \cA^{n|In(v)|}\rightarrow \cA^{k}. \label{code3}
\end{eqnarray}
Such a network code fulfills the requirements of every terminal in the network
$k$ times in $n$ use of the network. The ratio $k/n$ is the rate of the code.
If $k=n$, then the code is called a $n$-length vector network code, and
if $k=n=1$, then the code is called a scalar network code.

When $\cA$ is a field (or more generally, a module over a commutative ring),
a network code is said to be linear if all the edge functions and the decoding
functions are linear over that field (respectively ring).
For any edge $e\in E$, let $Y_e $ denote the symbol transmitted through
$e$. When $\cA$ is a field, the
symbol vector carried by an edge $e$ in a $(k,n)$-fractional linear network 
code is of the form
\begin{eqnarray}
Y_e & = & 
\begin{cases}
\sum_{e^\prime : head(e^\prime ) = tail(e)} \beta_{e^\prime, e}
Y_{e^\prime}\label{code4} \,\,\text{ if } tail(e)\notin S \\
\sum_{j : X_j \mbox{ generated at }tail(e)} \alpha_{j, e}
X_{j} \, \, \text{ if } tail(e)\in S,
\end{cases}
\label{code5}
\end{eqnarray}
where $\beta_{e^\prime, e} \in \cA^{n \times n}$ for $e,e' \in E$
and $\alpha_{j,e} \in \cA^{n \times k}$ for $e\in E, j:X_j \mbox{ generated
at }tail(e)$.
A symbol vector recovered by a terminal node $v$ is given by
\begin{eqnarray}
R_{v,i} & = & \sum_{e \in In(v)} \gamma_{v,i,e} Y_e \label{code6}
\end{eqnarray}
where $\gamma_{v,i,e} \in \cA^{k \times n}$. The coefficients $\beta_{e^\prime, e},
\alpha_{j,e}$, and $\gamma_{v,i,e} $ are called the local coding coefficients.
In an $n$-length vector linear network code, these coefficients are $n\times n$
matrices, and in a scalar linear network code, they are scalars.

Given a sum-network $\cN$, its reverse network $\cN^\prime$ is defined 
as the network with the same set of vertices, the edges reversed
keeping their capacities same, and the roles of the sources and the terminals 
interchanged.

\label{page:capacity}
For a given network, a rate $r$ is said to be {\it achievable} if there
exists a $(k,n)$ fractional network code so that
$k/n \geq r$. This is similar to the definition used in \cite{dougherty4}
where the requirement is $k/n=r$. 
We note that our definition is different from the more common definition of
achievability in the information theory literature where existence
of codes of rates arbitrarily close to $r$ suffices.
The {\it coding capacity}, or simply the {\it capacity}, of a network is defined
to be the supremum of all achievable rates. Note that the capacity
is not necessarily achievable according to our definition.
The {\it linear coding capacity} is defined to be the supremum of the rates
achievable by linear network codes.

%\begin{figure}
%\centerline{\includegraphics[width=2.5in]{fnnc1.eps}
%\hspace*{10mm}
%\includegraphics[width=2.5in]{fnnc2.eps}}
%\caption{Equivalence between communicating the sum and communicating a linear combination}
%\label{fig_lin}
%\end{figure}

%\section{Construction of solvably equivalent networks}
\section{Solvably equivalent networks}
\label{con_networks}

In this section, we give two constructions of equivalent networks. First  
a construction of a sum-network from a multiple-unicast network 
is presented where
the constructed network is solvably equivalent (and also linear
solvably equivalent) with the original network. It is also shown that
the reverse network of the constructed sum-network and that of the
multiple-unicast network are also equivalent.
The second construction gives a multiple-unicast network from a given
sum-network so that the constructed network and the sum-network are
linear solvably equivalent.
 
Before presenting the constructions, we note that
for any linear function of the source symbols, the linear function can
be communicated to the terminals if the sum can be communicated since
the sources themselves can pre-multiply the symbols by the respective coefficients.
The converse also holds if all the coefficients are invertible.
So the results of this paper hold true for linear functions in general.

\vspace*{2mm}
\noindent
\textbf{Construction $\textbf{C}_1$} : {\it 
A sum-network from a multiple-unicast network}:
Consider any multiple-unicast network $\net_1$ represented by the
dotted box in Fig. \ref{fig:sum_network}. $\net_1$ has $m$ sources
$w_1,w_2,\ldots, w_m$ and $m$ corresponding terminals $z_1,z_2,\ldots,z_m$
respectively.
Fig. \ref{fig:sum_network} shows a sum-network $\net_2 = C_1(\net_1)$
constructed with
$\net_1$ as a part. In this network, there are $m+1$ sources $s_1, s_2,
\ldots, s_{m+1}$ and $2m$ terminals 
$$\{t_{L_i}, t_{R_i} |1\leq i\leq m\}.$$
The reverse networks of $\net_1$ and $\net_2$ are denoted by
$\net_1^\prime$ and $\net_2^\prime$ respectively, and are shown in Fig.
\ref{fig:reverse}. The set of vertices and the set of edges of $\net_2$
are respectively 
\begin{eqnarray}
V(\net_2) &  = & V(\net_1) \cup \{s_1,s_2,\ldots, s_{m+1}\} 
 \cup \{u_1,u_2, \ldots, u_m\} \nonumber \\
&& \cup \{v_1,v_2,\ldots, v_m\} \cup \{t_{L_1},t_{L_2},\ldots, t_{L_m}\} \nonumber \\
&& \cup \{t_{R_1},t_{R_2},\ldots,t_{R_m}\} \nonumber
\end{eqnarray}
%\begin{eqnarray}
%V(\net_2) & = & v(\net_1) \cup \{s_1,s_2,\ldots, s_{m+1}\} \nonumber \\
%&  & \cup \{u_1,u_2, \ldots, u_m\} \cup \{v_1,v_2,\ldots, v_m\}\nonumber \\
%& & \cup \{t_{L_1},t_{L_2},\ldots, t_{L_m}\} \cup \{t_{R_1},t_{R_2},\ldots,t_{R_m}\}
%\nonumber
%\end{eqnarray}
and
\begin{eqnarray}
E(\net_2) & = & E(\net_1) \cup \{(s_i,w_i)|i=1,2,\ldots, m\} \nonumber \\
&&  \cup \{(s_i, u_j)| i,j=1,2,\ldots, m, i\neq j\} \nonumber \\
&& \cup \{(s_{m+1},u_j)| j=1,2,\ldots, m\} \nonumber \\
&& \cup \{(s_i, t_{R_i})| i=1,2,\ldots, m\} \nonumber \\
&& \cup \{(u_i, v_i)| i=1,2,\ldots, m\} \nonumber \\
&& \cup \{(v_i, t_{L_i})| i=1,2,\ldots, m\} \nonumber \\
&& \cup \{(v_i, t_{R_i})| i=1,2,\ldots, m\} \nonumber \\
&& \cup \{(z_i, t_{L_i})| i=1,2,\ldots, m\}.\nonumber
\end{eqnarray}

\begin{figure*}[t]
%\begin{minipage}[b]{0.3\linewidth}
%\centering
%\includegraphics[width=1.8in]{N12.eps}
%\caption{A multiple-unicast network $\net_1$}
%\label{fig:multiple-unicast}
%\end{minipage}
%\hspace{0.3cm}
\begin{minipage}[b]{0.48\linewidth}
\centerline{\includegraphics[width=3.2in]{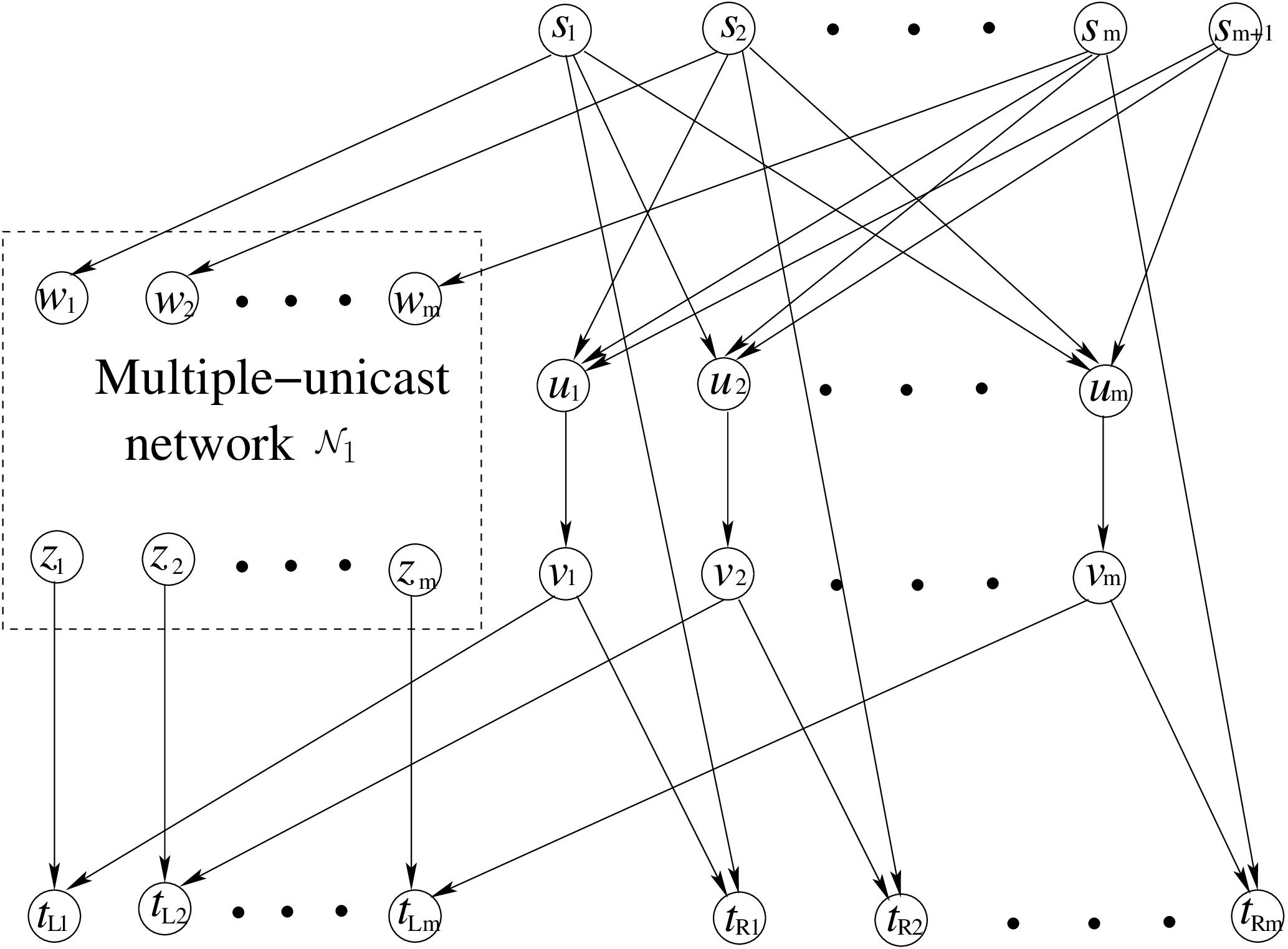}}
\caption{The construction ($C_1$) of a sum-network $\net_2$ from the multiple-unicast network $\net_1$}
\label{fig:sum_network}
%\vspace*{-45mm}
%\vspace*{30mm}
\end{minipage}
\hspace{.3cm}
\begin{minipage}[b]{0.48\linewidth}
\centerline{\includegraphics[width=3.2in]{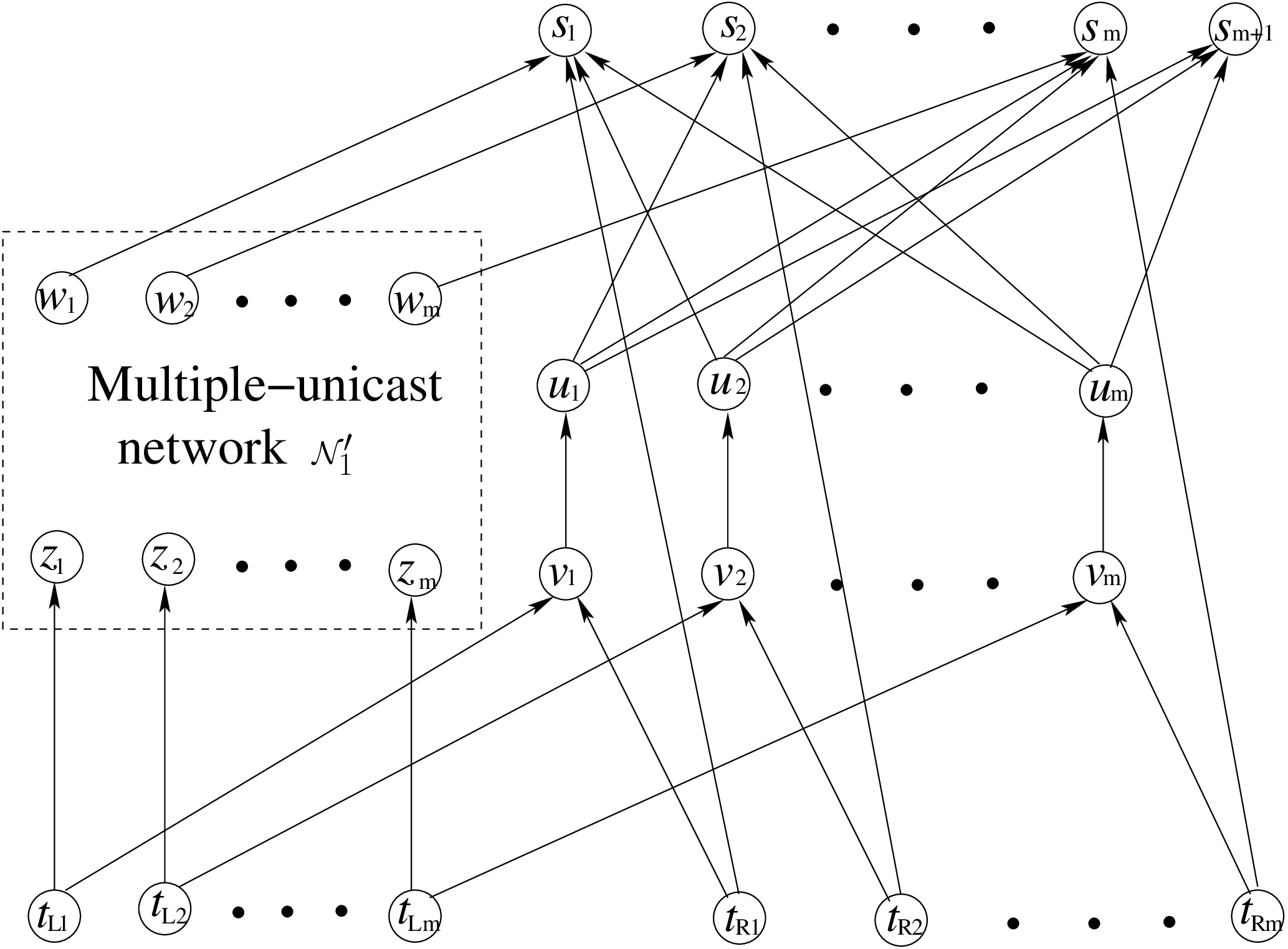}}
\caption{The reverse network $\net_2^{\prime}$ of $\net_2$}\label{fig:reverse}
%\vspace*{-43mm}
%\vspace*{28mm}
\end{minipage}
%\vspace*{-1mm}
\end{figure*}

%\begin{center}
%\begin{figure*}[t]
%\centering\includegraphics[width=4.7in]{N14.eps}
%\caption{The reverse network $\net_2^{\prime}$ of $\net_2$}\label{fig:reverse}
%\end{figure*}
%\end{center}

The following theorem shows that the networks $\net_1$ and
$\net_2$, as well as their reverse networks, are solvably equivalent
in a very strong sense. Here $F$ denotes a field and $G$ denotes an
abelian group.

\begin{theorem}
\label{thm:SN_using_MUN}
(i) The sum-network $\net_2$ is $k$-length vector linear solvable over 
$F$ if and only if the multiple-unicast network $\net_1$ is $k$-length
vector linear solvable over $F$.\\
(ii) The sum-network $\net_2$ is solvable over $G$ if and only
if the multiple-unicast network $\net_1$ is solvable over $G$.\\
(iii) The reverse sum-network $\net_2^{\prime}$ is $k$-length vector linear solvable over 
$F$ if and only if the reverse multiple-unicast network $\net_1^{\prime}$ is $k$-length
vector linear solvable over $F$.\\
(iv) The reverse sum-network $\net_2^{\prime}$ is solvable over $G$ if and only
if the reverse multiple-unicast network $\net_1^{\prime}$ is solvable over $G$.
\end{theorem}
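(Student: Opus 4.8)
The plan is to establish the forward and reverse implications in each of the four parts by explicitly translating a network code on one side into a network code on the other, exploiting the layered structure of the construction $C_1$. The key structural observation is the following: in $\net_2$, terminal $t_{Ri}$ receives the symbols on edges $(s_i,t_{Ri})$ and $(v_i,t_{Ri})$, i.e. it has access to $X_{s_i}$ (the source symbol of $s_i$, carried directly on the first edge) and to whatever $v_i$ forwards. Similarly $t_{Li}$ receives what $v_i$ forwards and the symbol $Y_{z_i}$ that terminal $z_i$ of the embedded copy of $\net_1$ would recover. The node $u_i$ is fed by all sources $s_j$ with $j\neq i$ together with $s_{m+1}$, and relays through $v_i$. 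So the natural ``honest'' code is: each $s_i$ injects $X_i$ into $w_i$ and into each $u_j$ ($j\ne i$) and into $t_{Ri}$; $s_{m+1}$ injects $X_{m+1}$ into every $u_j$; inside $\net_1$ one runs a solution that delivers $X_i$ to $z_i$; node $u_i$ computes $\sum_{j\ne i}X_j + X_{m+1}$ and passes it through $v_i$ to $t_{Li}$ and $t_{Ri}$.

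First I would prove the ``if'' directions of (i) and (ii). Given a $k$-length vector linear solution (resp. any solution over $G$) for $\net_1$, I run it on the embedded copy, so $z_i$ recovers $X_i$; then $t_{Li}$ has $Y_{z_i}=X_i$ and $Y_{v_i}=\sum_{j\ne i}X_j+X_{m+1}$, whose sum is $\sum_{j=1}^{m+1}X_j$, the desired sum; and $t_{Ri}$ has $X_i$ and $\sum_{j\ne i}X_j+X_{m+1}$, again summing to the total. This works verbatim over any abelian group $G$ for (ii) and is linear over $F$ for (i). For the ``only if'' directions, I would start from a solution for $\net_2$ and argue that the structure essentially \emph{forces} the embedded copy of $\net_1$ to be solved. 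The idea: consider the cut separating $\{s_1,\dots,s_m,s_{m+1}\}\cup\{u_i,v_i\}$ from the terminals; since $t_{Ri}$ must recover $\sum_j X_j$ from only $Y_{(s_i,t_{Ri})}$ and $Y_{(v_i,t_{Ri})}$, and the edge $(s_i,t_{Ri})$ carries a function of $X_i$ alone, a dimension/independence count (or, over $G$, an information-theoretic cut-set argument as in the standard Type~I equivalence proofs) shows $Y_{(v_i,t_{Ri})}$ must determine $\sum_{j\ne i}X_j$; similarly, comparing what $t_{Li}$ gets, the symbol $Y_{z_i}$ arriving from the $\net_1$-part must, together with $Y_{v_i}$, determine the total sum, hence must determine $X_i$. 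Since $z_i$ in $\net_1$ is fed only by the edges of $\net_1$ (the edge $(z_i,t_{Li})$ is the only new edge at $z_i$), the code restricted to $\net_1$'s edges already recovers $X_i$ at $z_i$: that is a solution of $\net_1$, linear over $F$ if the original was.

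For parts (iii) and (iv), the same two-way translation is carried out on the reverse networks $\net_1'$ and $\net_2'$ shown in Fig.~\ref{fig:reverse}. In $\net_2'$ the sources are $\{t_{Li},t_{Ri}\}$ and the terminals are $\{s_1,\dots,s_{m+1}\}$; the sum to be delivered to each $s_i$ is $\sum_{i}(X_{t_{Li}}+X_{t_{Ri}})$. I would again exhibit the honest code --- now pushing symbols ``backwards'' through the reversed edges, using a solution of $\net_1'$ on the embedded reversed copy --- for the ``if'' direction, and mirror the cut-set forcing argument for the ``only if'' direction, using that $s_{m+1}$ (a terminal of $\net_2'$) is fed only from the $u_j$'s and must recover the total sum, which pins down the contributions routed through the $\net_1'$-part.

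The main obstacle is the ``only if'' direction in the general (non-linear, group-valued) cases (ii) and (iv): one cannot invoke linear-algebraic rank arguments and must instead use an entropy/cut-set argument to show that any solution of $\net_2$ ``contains'' a solution of $\net_1$. The delicate point is ruling out the possibility that the extra sources $s_{m+1}$ and the side edges $(s_i,u_j)$ let $\net_2$ succeed while the embedded $\net_1$ does \emph{not} internally solve its demands --- i.e. showing the side information genuinely cancels out. I expect this to follow from carefully chosen cuts (isolating each $z_i$ and each $v_i$) together with the independence of the source processes, exactly in the spirit of the known Type~I reductions, but it is the step that requires the most care; the linear parts (i) and (iii) reduce to routine rank bookkeeping once the honest code is written down.
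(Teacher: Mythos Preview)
Your proposal is correct and follows the same architecture as the paper: the ``if'' directions extend a solution of $\net_1$ (resp.\ $\net_1'$) to $\net_2$ (resp.\ $\net_2'$) by summing along the added edges exactly as you describe, and the ``only if'' directions in the linear parts (i) and (iii) are handled in the paper by writing out all the local coding matrices and solving the resulting system of equations --- your ``rank bookkeeping'' is precisely this, and the paper's computation confirms that $Y_{(z_i,t_{Li})}=\beta_i^i X_i$ with $\beta_i^i$ invertible.

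The one real point of divergence is the execution of the nonlinear ``only if'' in (ii) and (iv). You propose an entropy/cut-set argument; the paper argues purely combinatorially. It first shows (from $t_{Ri}$'s demand) that $f_1^i:=Y_{(u_i,v_i)}$ is bijective in each argument and \emph{symmetric} in its variables, and that the decoders $g_1^i,g_2^i$ are bijective in each argument. Then, to show $f_2^i:=Y_{(z_i,t_{Li})}$ does not depend on $X_j$ for $j\neq i$, it swaps the values of $X_j$ and $X_{m+1}$: the total sum is unchanged, $f_1^i$ is unchanged by symmetry, hence by bijectivity of $g_2^i$ the value $f_2^i$ is unchanged --- but $f_2^i$ never sees $X_{m+1}$, so it is insensitive to $X_j$. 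Your entropy route would also go through (independence of $X_{m+1}$ forces $S_i=\sum_{j\neq i}X_j+X_{m+1}$ to be uniform and independent of $X_i$ conditional on $Y_{z_i}$, giving $H(X_i\mid Y_{z_i})=0$ and hence $Y_{z_i}$ in bijection with $X_i$), and is arguably slicker; the paper's swap argument has the virtue of being completely elementary, with no probabilistic machinery. Either way, the crucial device is the same one you flagged: the extra source $s_{m+1}$, which is invisible to the $\net_1$-part but visible to every $u_i$, is what rules out the side information ``leaking'' into $f_2^i$.
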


The proof of the theorem is given in Appendix \ref{th1proof}.

\begin{remark}
A. Though parts (i) and (iii) are stated for a finite field,
the same results can be shown to hold 
over any finite commutative ring $R$ with identity, any $R$-module with
annihilator~(\cite{atiyah}) $\{0\}$, and for 
more general forms of linear network codes defined in~\cite{jaggi2}. B. In
parts (ii) and (iv), solvability is not restricted to codes where nodes
perform only the group operation in $G$, but includes arbitrary non-linear
coding. The alphabet is restricted
to an abelian group only for defining the {\em sum}. C. For $k=1$,
the parts (i) and (iii) of the theorem gives the equivalence of the
networks for scalar solvability as a special case.
\end{remark}

It is well-known (\cite{dougherty3}) that one can construct a solvably
equivalent multiple-unicast network from any communication network.
We can then construct a solvably equivalent sum-network from this
derived multiple-unicast network using Construction $C_1$. As a result,
we get a sum-network which is solvably equivalent, and also linear
solvably equivalent, to the original communication network.

Though Construction $C_1$ gives a sum-network which is equivalent
under vector network coding solution, the equivalence does not hold
under fractional network coding. Though, as Lemma~\ref{lem:frac} below states,
a $(k,n)$ fractional solution of $\net_1$ ensures the existence of a
$(k,n)$ fractional solution of $C_1(\net_1)$, the converse does not hold.
The constructed sum-network may have a $(k,n)$
fractional solution even though the original network does not have
a $(k,n)$ fractional solution. So the coding capacity
of the constructed network may be more than that
of the original network. For instance, consider the multiple-unicast
network where $m$ source-terminal pairs
are connected through a single bottleneck link.
The multiple-unicast network has capacity $1/m$, whereas the constructed
sum-network, shown in Fig.~\ref{fig:MUN1_sum},
has a $(1,2)$ fractional vector linear solution, and thus, has a capacity
at least $1/2$. In the first
time-slot, the bottle-neck link in the multiple-unicast network
can carry the sum $X_1+X_2+\ldots + X_m$ which is then forwarded
to all the $m$ left hand side terminals of the sum-network. The links
$(u_i, v_i)$ carry only $X_{m+1}$ in the first time-slot.
So, by using one time-slot, the left hand side terminals recover the
sum $X_1+X_2+\ldots + X_{m+1}$. In the second time-slot,
the network can obviously be used to communicate the sum 
$X_1+X_2+\ldots + X_{m+1}$ to the right hand side terminals since there
is exactly one path between any source and any of these terminals.

\begin{lemma}
\label{lem:frac}
For some $k\leq n$, if a multiple-unicast network $\net$ has a $(k,n)$ fractional coding (resp. fractional linear) solution,
then the sum-network $C_1 (\net )$ also has a $(k,n)$ fractional coding (resp. fractional
linear) solution.
\end{lemma}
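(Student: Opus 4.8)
The plan is to take a given $(k,n)$ fractional (linear) solution for the multiple unicast network $\net$ and extend it to a $(k,n)$ fractional (linear) solution for $C_2(\net)$ by specifying coding functions on the edges that $C_2$ adds to $\net$, mimicking the scalar linear argument already used to prove Theorem~\ref{thm:SN_using_MUN}(i),(ii) but now carrying blocks of $k$ source symbols over $n$ channel uses. First I would recall the structure of $C_2(\net)$: it contains $\net$ as a subnetwork, with fresh sources $s_1,\dots,s_{m+1}$ feeding the old sources $w_i$ via edges $(s_i,w_i)$ and feeding the nodes $u_j$ via the edges $(s_i,u_j)$ for $i\neq j$ and $(s_{m+1},u_j)$; nodes $u_i\to v_i$; and the terminals $t_i$ (right side) and $t_j^i$ (left side), where each $t_i$ sees $s_i$ and $v_i$, and each left terminal $t_j^i$ sees $v_i$ and the old terminal $z_j^i$ (which in $\net$ recovers $X_i$). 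Each source $s_i$ now generates a length-$k$ vector $X_i$ over the alphabet, and every edge transmits a length-$n$ vector.

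The core of the construction is: along $(s_i,w_i)$ send $X_i$ (padded/encoded into $n$ symbols exactly as the given solution for $\net$ encodes source $i$); run the given $(k,n)$ solution of $\net$ verbatim inside the $\net$-part, so that $z_j^i$ recovers $X_i$; on each edge $(s_\ell,u_j)$ (with $\ell\neq j$) and $(s_{m+1},u_j)$ forward the respective source block; at $u_j$ form $\sum_{\ell\neq j}X_\ell + X_{m+1}$ and pass it to $v_j$; then at the right terminal $t_i$ add the incoming $X_i$ from $s_i$ to the block $\sum_{\ell\neq i}X_\ell+X_{m+1}$ arriving from $v_i$ to obtain $\sum_{\ell=1}^{m+1}X_\ell$, and at the left terminal $t_j^i$ add the $X_i$ recovered by $z_j^i$ to the same block from $v_i$ to obtain $\sum_{\ell=1}^{m+1}X_\ell$. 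Thus every terminal recovers the sum of all $m+1$ source blocks, which is exactly what the sum-network $C_2(\net)$ demands. All the added operations (selection, forwarding, vector addition) are linear over the ring, so if the solution for $\net$ was fractional linear the resulting solution for $C_2(\net)$ is fractional linear; if it was an arbitrary (nonlinear) coding solution over an abelian group, the added operations are still well-defined group operations, giving a fractional coding solution.

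The only things that need care are bookkeeping rather than genuine obstacles: one must check that the added edges create no directed cycle (they do not, since $s_i$ are new sources with no incoming edges, and $u_i,v_i$ and the terminals form a short acyclic cascade hanging off $\net$), and that feeding the given $\net$-solution through the edges $(s_i,w_i)$ is consistent — i.e. that treating $w_i$ as a non-source relay that simply outputs what it receives on $(s_i,w_i)$ reproduces source $i$'s behaviour in the original $\net$-solution; this is immediate because the original edge functions out of $w_i$ depend only on $X_i$, which is exactly the block now delivered on $(s_i,w_i)$. I expect the main (still minor) point to be verifying that the $(k,n)$ block lengths match up cleanly across the new edges, namely that the new edges can indeed carry the length-$n$ vectors produced by the source encoders of the given solution, and that the decoders at $t_i$ and $t_j^i$ only need the recovered $k$-length blocks (not the raw $n$-length transmissions) to form the sum — which holds because $z_j^i$ already decodes $X_i$ as a $k$-vector under the given solution. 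Assembling these observations yields the claimed $(k,n)$ fractional coding (resp. fractional linear) solution for $C_2(\net)$.
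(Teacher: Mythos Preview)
Your proposal is correct and is exactly the (obvious) extension the paper has in mind; the paper's own proof is the single sentence ``The proof is obvious.'' One small sharpening: the hypothesis $k\le n$ is what makes your step ``forward the respective source block'' on the side edges $(s_\ell,u_j)$, $(s_{m+1},u_j)$, $(u_i,v_i)$, $(v_i,t_i)$, $(v_i,t_j^i)$ legitimate, since each of these unit-capacity edges carries an $n$-vector per block and must transport a $k$-vector (or a sum of $k$-vectors); you flag the block-length bookkeeping but could state this use of $k\le n$ explicitly.
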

\begin{proof}
The proof is obvious from the construction, and is omitted.
\end{proof}

The following theorem gives a bound on the capacity of the 
constructed network $C_1(\net)$.

\begin{theorem}
\label{th:capacity}
For any multiple-unicast network $\net$,
$$
\min (1, Capacity (\net )) \leq Capacity (C_1 (\net )) \leq 1.
$$
\end{theorem}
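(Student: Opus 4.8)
The plan is to prove the two inequalities separately: the left one is a direct consequence of the preceding lemma together with the construction $C_2$, and the right one is a cut argument at one of the right-hand terminals of $\net_4 = C_2(\net)$.

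For the lower bound $\min(1, Capacity(\net)) \le Capacity(C_2(\net))$, I would fix an arbitrary $\rho < \min(1, Capacity(\net))$ and produce a fractional solution of $C_2(\net)$ of rate at least $\rho$. Since $\rho < Capacity(\net)$, the network $\net$ has a $(k,n)$ fractional solution with $k/n \ge \rho$; by appending idle channel uses we may further assume $k \le n$ while keeping the rate at least $\rho$ (this is exactly where $\rho < 1$ is used). By the preceding lemma — whose one-line proof does not use that $\net$ is a multiple-unicast network rather than a general Type I network — $C_2(\net)$ then has a $(k,n)$ fractional solution, so $\rho$ is an achievable rate of $C_2(\net)$. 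Letting $\rho$ increase to $\min(1, Capacity(\net))$ gives the bound, and the linear case is identical using the linear part of the lemma.

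For the upper bound $Capacity(C_2(\net)) \le 1$, it suffices to show that every $(k,n)$ fractional solution of $\net_4$ over any finite alphabet $G$ satisfies $k \le n$. I would examine the right-hand terminal $t_1$ and the source $s_1$, which generates only the process $X_1$. From the edge set of $\net_4$, the only directed path from $s_1$ to $t_1$ is the single edge $(s_1,t_1)$: every other out-edge of $s_1$ enters some $u_j$ with $j \ne 1$, and from $u_j$ one reaches only $v_j$, $t_j$ and the left terminals attached to $v_j$, never $t_1$. Let $Y$ be the length-$n$ block transmitted on $(s_1,t_1)$; it is a function of $X_1$ alone, and tracing the other in-edge of $t_1$ back through $v_1$ and $u_1$ shows that everything $t_1$ receives is a function of $\big(Y, X_2, \ldots, X_{m+1}\big)$. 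Setting $X_2 = \cdots = X_{m+1} = 0$, the terminal $t_1$ must then recover $X_1$ from $Y$ alone, so $X_1 \mapsto Y$ is injective from $G^k$ into $G^n$ and hence $k \le n$; consequently every achievable rate of $C_2(\net)$ is at most $1$.

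The only step that needs genuine care is the upper bound: one must read off from the edge list of $\net_4$ both that $(s_1,t_1)$ is the unique $s_1$--$t_1$ path and that the entire received view of $t_1$ factors through $\big(Y, X_2, \ldots, X_{m+1}\big)$. The rate-truncation step in the lower bound and the implication ``$k \le n$ for every solution $\Rightarrow$ capacity $\le 1$'' are routine, and the achievability is already handled by the quoted lemma.
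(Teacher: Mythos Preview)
Your proof is correct and follows the same outline as the paper's: the lower bound comes from the preceding lemma (with the truncation to $k\le n$ that you make explicit), and the upper bound is a min-cut-equals-one argument at a single source--terminal pair. The only cosmetic difference is the choice of pair: the paper observes that the min-cut from $s_{m+1}$ to every terminal is $1$, whereas you exhibit the single edge $(s_1,t_1)$ as the unique $s_1$--$t_1$ path; both witnesses work equally well.
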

\begin{proof}
The first inequality follows from the preceding lemma. The second
inequality follows from the fact that the min-cut from source
$s_{m+1}$ to each terminal is $1$.
\end{proof}
\begin{corollary}
\label{cor:capacity}
If a multiple-unicast network $\net$ has capacity 1, then the capacity of
the sum-network $C_1 (\net)$ is also 1.
\end{corollary}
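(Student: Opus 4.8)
The plan is to read this off directly from Theorem~\ref{th:capacity}, which has just been established. That theorem gives the two-sided bound
\[
\min\bigl(1,\,Capacity(\net)\bigr) \;\leq\; Capacity(C_2(\net)) \;\leq\; 1 .
\]
Under the hypothesis $Capacity(\net)=1$, the left-hand side becomes $\min(1,1)=1$, so the chain of inequalities collapses to $1 \leq Capacity(C_2(\net)) \leq 1$, forcing $Capacity(C_2(\net))=1$. That is the entire argument; there is no further work to do.

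For completeness I would recall where the two bounding inequalities come from, since the corollary is only as strong as they are. The lower bound $\min(1,Capacity(\net)) \leq Capacity(C_2(\net))$ is exactly the content of the preceding lemma: any $(k,n)$ fractional (resp.\ fractional linear) solution of $\net$ with $k\le n$ extends to a $(k,n)$ fractional (resp.\ fractional linear) solution of $C_2(\net)$, so every achievable rate $k/n\le 1$ of $\net$ is achievable in $C_2(\net)$; taking the supremum over such rates gives $\min(1,Capacity(\net))$. The upper bound $Capacity(C_2(\net))\le 1$ comes from the cut argument already noted: in $C_2(\net)$ the source $s_{m+1}$ is separated from every terminal by a cut of size $1$, so no rate exceeding $1$ can be supported at any terminal, hence the coding capacity is at most $1$.

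There is no genuine obstacle here — the corollary is a pure specialization of Theorem~\ref{th:capacity}. The only thing worth stating carefully is that the argument is insensitive to whether one works with general fractional codes or fractional linear codes, because both the lemma (lower bound) and the min-cut bound (upper bound) hold verbatim in either setting; thus the corollary is simultaneously a statement about the coding capacity and the linear coding capacity of $C_2(\net)$. I would therefore keep the proof to a single sentence invoking Theorem~\ref{th:capacity} with $Capacity(\net)=1$.
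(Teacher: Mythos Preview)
Your proposal is correct and matches the paper's approach exactly: the corollary is stated in the paper without proof, as an immediate specialization of Theorem~\ref{th:capacity}, which is precisely what you do by plugging $Capacity(\net)=1$ into the two-sided bound. Your additional remarks recapitulating the sources of the two inequalities are accurate but unnecessary for the proof itself.
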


\begin{figure*}[t]
\begin{minipage}[b]{0.7\linewidth}
\centerline{\includegraphics[width=3.5in]{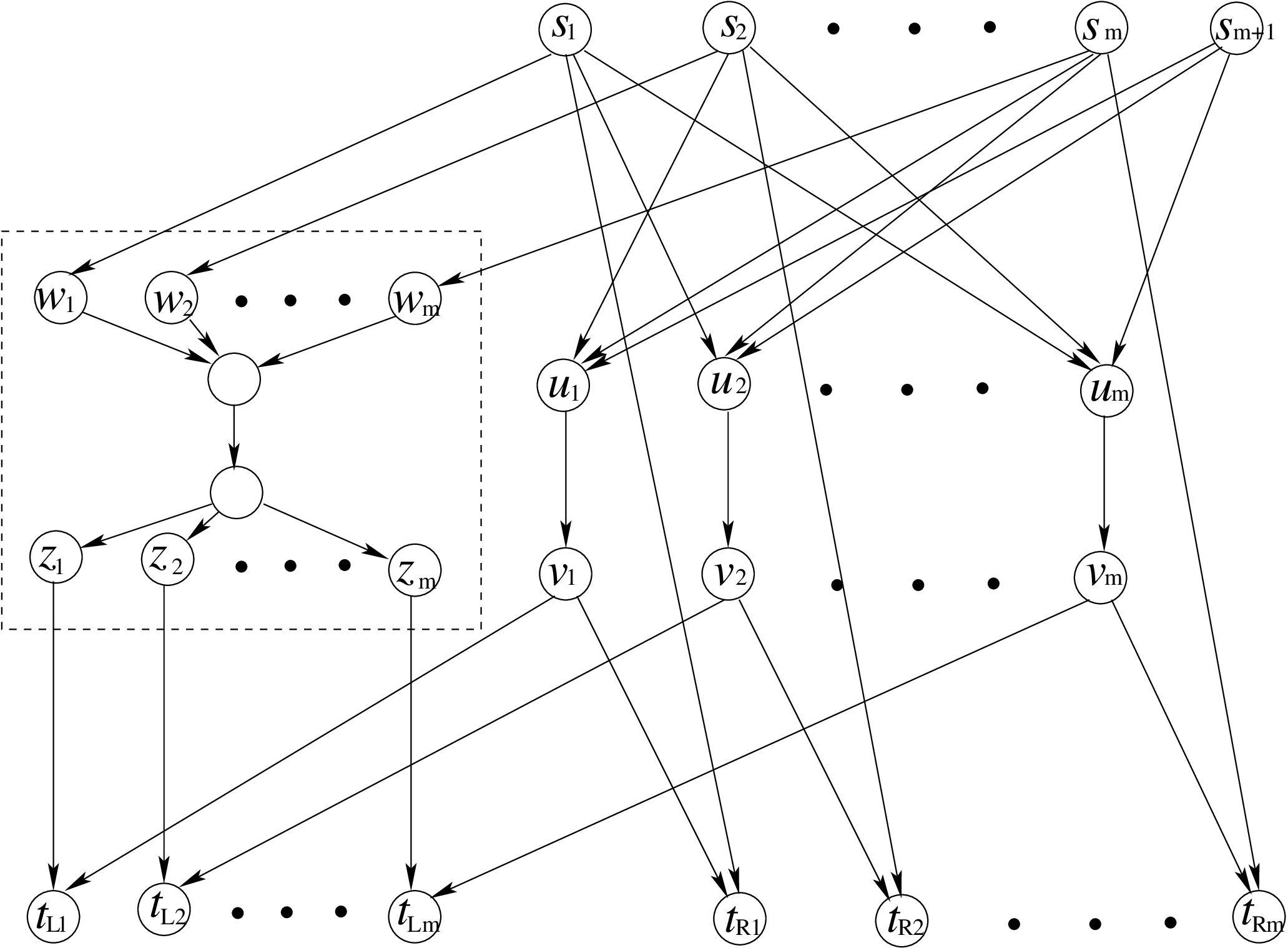}}
\caption{A Multiple-unicast network with capacity $1/m$ for which
Construction $C_1$ gives a sum-network of capacity $\geq 1/2$}
\label{fig:MUN1_sum}
\end{minipage}
\hspace{.3cm}
\begin{minipage}[b]{0.25\linewidth}
\centerline{\includegraphics[width=1.3in]{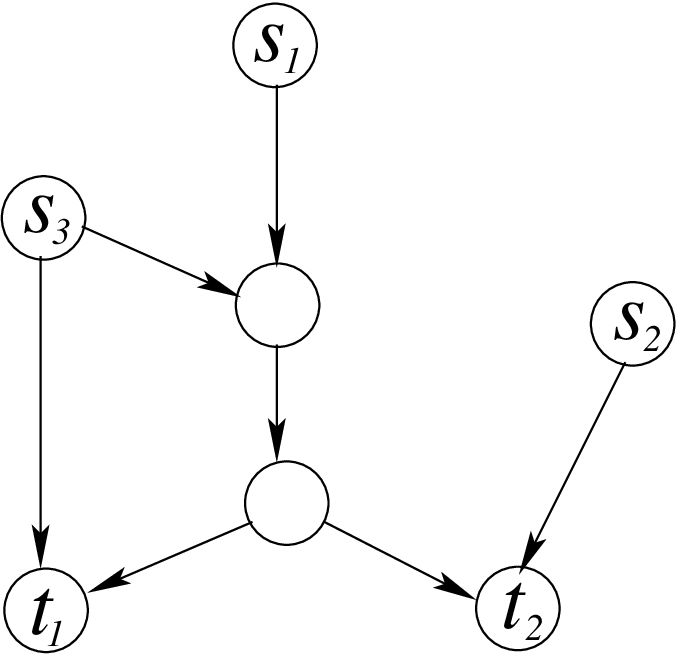}}
\caption{A component network from \cite{dougherty3} that is used
in Construction $C_2$}
\label{fig:component}
\end{minipage}
\end{figure*}

\vspace*{3mm}
\noindent
\textbf{Construction $\textbf{C}_2$ :} {\it A multiple-unicast network from a sum-network}:
Consider any sum-network $\net$ represented by the dotted box in Fig. 
\ref{fig:sum_to_multiple_unicast}. The sum-network has $m$ sources $w_1,w_2,\ldots,
w_m$ and $n$ terminals $z_1,z_2,\ldots,z_n$. The figure
shows a multiple-unicast network $C_2 (\net )$
of which $\net$ is a part. In this multiple-unicast network,
the source-terminal pairs are $\{(s_i, t_i)|i=1,2,\ldots, m\}\cup
\{(s_{ij}, t_{ij})| 1\leq i\leq m, 2\leq j\leq n\}$.

\begin{figure*}[t]
\centering
\includegraphics[width=\textwidth]{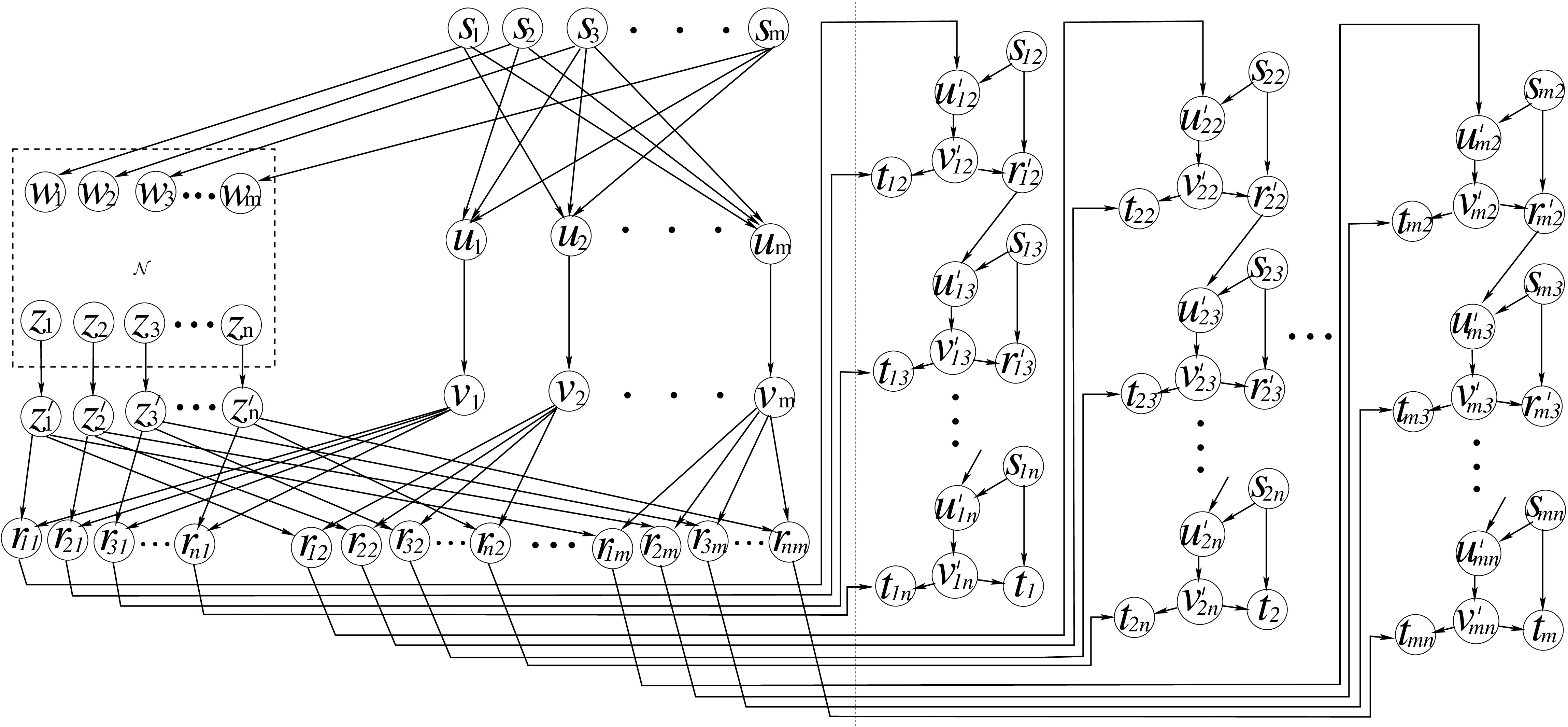}
\caption{Construction $C_2$ of a multiple-unicast network $C_2(\net)$
from a sum-network $\net$ shown in the dotted box}
\label{fig:sum_to_multiple_unicast}
\end{figure*}

\label{page:C2}
The nodes and the edges of $C_2(\net)$ are respectively
\begin{eqnarray}
V(C_2(\net)) = && V(\net ) \cup_{1\leq i\leq m} \{s_i,u_i,v_i,t_i \} \nonumber \\
&& \cup  \{z_j^\prime | 1\leq j\leq n\} \cup_{1\leq i\leq m\atop{2\leq j\leq n}} \{s_{ij},t_{ij},u^\prime_{ij},v^\prime_{ij}\} \nonumber \\
&& \cup \{r_{ji} | 1\leq i\leq m,1\leq j\leq n\} \nonumber \\
&& \cup \{r_{ij}^\prime| 1\leq i\leq m, 2\leq j\leq n-1\} \nonumber 
\end{eqnarray}
and
\begin{eqnarray}
&& \hspace*{-10mm} E(C_2(\net))  =  E(\net ) 
\cup \{(s_i,u_j)|1\leq i,j\leq m, i\neq j\} \nonumber \\
&& \cup_{1\leq i\leq m} \{(s_i,w_i),(u_i,v_i),(s_{in},t_i),(v_{in}^\prime,t_i),(r_{1i},u^\prime_{i2})\} \nonumber \\
&& \cup_{{1\leq i\leq m}\atop{1\leq j \leq n}}\{(z_j^\prime, r_{ji}),(v_i, r_{ji})\} 
    \cup \{(z_i,z_i^\prime)|1\leq i\leq n\}  \nonumber \\
 && \cup_{{1\leq i\leq m}\atop{2\leq j\leq n}}
\{(r_{ji},t_{ij}),(s_{ij},u_{ij}^\prime),(u_{ij}^\prime,v_{ij}^\prime),
     (v_{ij}^\prime,t_{ij})\} \nonumber \\
&& \cup_{{1\leq i\leq m}\atop{2\leq j \leq n-1}}\{(s_{ij},r_{ij}^\prime),(v_{ij}^\prime,r_{ij}^\prime),(r_{ij}^\prime, u_{i(j+1)}^\prime)\} . \nonumber
\end{eqnarray}
The right half of the figure is constructed using a method used in
\cite{dougherty3}. It consists of $m$ chains, each with $n-1$ copies
of the network shown in Fig.~\ref{fig:component}.
This component network in Fig.~\ref{fig:component} has the property that
if $t_2$ wants to recover the symbol generated by $s_3$, and
$t_1$ wants to recover an independent symbol $X_1$ generated by $s_1$, then
$s_1$ and $s_2$ both must send $X_1$ on the outgoing links~(\cite{dougherty3}).
The following theorem states the equivalence between $\net$ and $C_2(\net )$.

\begin{theorem}
\label{thm:MUL_using_SUM}
The multiple-unicast network $C_2( \net )$ is $k$-length vector linear solvable over 
a finite field $F$ if and only if the sum-network $\net$ 
is $k$-length vector linear solvable over $F$.
\end{theorem}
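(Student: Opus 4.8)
The plan is to establish a two-way correspondence between $k$-length vector linear network codes for the sum-network $\net$ and $k$-length vector linear network codes for the multiple-unicast network $C_3(\net)$. The core idea is that the extra gadgetry added in Construction $C_3$ — the sources $s_i$ feeding the $w_i$, the nodes $u_j, v_j$ with the edges $(s_i,u_j)$ for $i \neq j$, the nodes $v_i'$ hanging off the $z_i$, and the lower-half chains built from the component network of Fig. \ref{fig:component} — collectively force any linear solution of $C_3(\net)$ to route information in a way that emulates a sum-computation inside the embedded copy of $\net$. I would prove the two directions separately.

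First, the ``only if'' direction: assume $C_3(\net)$ has a $k$-length vector linear solution. The key structural fact to extract is that, because of the component-network property quoted from \cite{dougherty3}, every terminal $t_{ij}$ must recover $X_i$ (the message of source $s_i$), and to do so the node $v_i'$ — the unique node feeding all the $t_{ij}$ for fixed $i$ across the $n$ copies — must carry exactly $X_i$ (up to an invertible matrix). Hence the edge $(z_i, v_i')$ carries $X_i$, i.e. the terminal-surrogate $z_i$ of the sum-network must be able to reconstruct $X_i$ from its incoming edges inside $\net$. Next I would argue that the symbol injected into $\net$ at $w_i$ is forced to be $X_i + $ (something independent of the $\net$-internal coding), using the $u_j/v_j$ gadget: the edge $(s_i,w_i)$ carries $X_i$, while each $v_j$ receives $X_{m+1}$-type combinations via $(s_i,u_j)$, $i\neq j$, and $(s_{m+1},u_j)$; matching the demands of the $t_i$ and $t_{ji}$ terminals forces the $v_j$ to carry $\sum_{i\neq j} X_i + X_{m+1}$ and forces $w_i$ to effectively inject $X_i$ while the terminal $z_i$ must produce $\sum_{i=1}^m X_i + X_{m+1} - (\text{stuff available at }z_i\text{ from }v_j\text{'s})$, which collapses to: $\net$ computes $\sum_i X_i$ at each $z_i$. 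Restricting the coding coefficients of the $C_3(\net)$-solution to the edges of $E(\net)$ then yields a $k$-length vector linear solution of the sum-network $\net$.

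Second, the ``if'' direction: given a $k$-length vector linear solution of $\net$ in which each $z_i$ recovers $X_1 + \cdots + X_m$, I would extend it to $C_3(\net)$ by: letting $(s_i,w_i)$ carry $X_i$; letting $(s_i,u_j)$ carry $X_i$ and $(s_{m+1},u_j)$ carry $X_{m+1}$, so $u_j$ and hence $v_j$ can form $\sum_{i\neq j}X_i + X_{m+1}$; letting $v_i'$ output the value on $(z_i,v_i')$, which by the sum-solution equals $\sum_{\ell=1}^m X_\ell$ plus the contribution routed through $w_i$; and then choosing the decoding functions at $t_i$, $t_{ij}$ so that, combining $v_i'$'s output with the $v_j$ outputs (and the lower-half chains, which are solved exactly as in \cite{dougherty3}), each terminal cancels the unwanted terms and recovers its own source. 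The lower-half chains are solvable by the standard forwarding solution for the component network, so they impose no new obstruction in this direction.

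The main obstacle will be the ``only if'' direction — specifically, pinning down precisely why the component-network property of \cite{dougherty3}, when chained $n$ times and attached at $v_i'$, forces $v_i'$ (and therefore the edge $(z_i,v_i')$) to carry exactly $X_i$ up to an invertible linear map, and simultaneously why the $u_j/v_j$ gadget forces the injections at the $w_i$ to be ``clean'' copies of the $X_i$. One has to be careful that a clever linear code could in principle mix the gadget messages with the $\net$-internal flow; ruling this out requires a dimension/rank argument using the edge-disjoint path structure (each $s_i$ reaches $t_i$ via the single path $s_i\to w_i\to\cdots$ through $\net$, and $s_i$ reaches each $t_{ji}$, $j\neq i$, only via $s_i\to u_j\to v_j\to t_{ji}$), together with the fact that the annihilator assumption lets us treat the $k\times k$ coefficient blocks as honest matrices over $F$. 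Once those two forcing statements are nailed down, the restriction/extension of coding coefficients between $E(\net)$ and $E(C_3(\net))$ is routine.
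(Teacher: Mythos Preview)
Your proposal rests on a misreading of Construction~$C_3$, and as a result both directions are aimed at the wrong target.

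First, there is no source $s_{m+1}$ in $C_3(\net)$: that extra source appears in Constructions~$C_1$ and~$C_2$, not here. In $C_3$ the sources are only $s_1,\ldots,s_m$ (plus the chain-sources $s_{ij}$), and the node $u_j$ receives $X_i$ from every $s_i$ with $i\neq j$; thus $v_j$ carries a combination of $\{X_l : l\neq j\}$, with no ``$X_{m+1}$'' term. Your entire cancellation story in both directions depends on $X_{m+1}$ and therefore does not apply.

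Second --- and this is the central conceptual error --- you assert that the chain gadgets force the edge $(z_i,v_i')$ to carry $X_i$ up to an invertible matrix. That is backwards. The node $z_i$ is a \emph{terminal} of the embedded sum-network, and what it is supposed to produce is (essentially) $\sum_l X_l$, not a single $X_i$. In the paper's argument the nodes that must recover a single $X_j$ are the intermediate nodes $r_{ij}$, each of which has two incoming edges: one from $v_i'$ carrying $Y_{(z_i,v_i')}=\sum_l \eta_{il}X_l$, and one from $v_j$ carrying $\sum_{l\neq j}\beta_{lj}X_l$. The recovery of $X_j$ at $r_{ij}$ gives $\gamma_{ij}'\eta_{ij}=I$ and $\gamma_{ij}'\eta_{il}+\gamma_{ij}\beta_{lj}=0$ for $l\neq j$, and the real work is to deduce from these relations that $\eta_{il}^{-1}\eta_{il'}=\beta_{lj}^{-1}\beta_{l'j}$ is \emph{independent of $i$} (which requires choosing $j\neq l,l'$ and hence $m\geq 3$). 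This shows that all the $z_i$ output the \emph{same} invertible linear combination of $X_1,\ldots,X_m$, whence pre-multiplication at the sources yields a genuine sum-solution for $\net$. Your plan never reaches this step because it is trying to force $z_i$ to output $X_i$, which is neither true nor what is needed.

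In short: re-examine Fig.~\ref{fig:sum_to_multiple_unicast} and the role of $v_i'$ versus $v_j$ versus $r_{ij}$; drop all references to $s_{m+1}$; and replace the claim ``$(z_i,v_i')$ carries $X_i$'' with the correct goal ``all $(z_i,v_i')$ carry the same invertible linear combination of the $X_l$'s,'' proving it via the coefficient identities above.
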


\begin{proof}
First, if the sum-network $\net$ in the dotted box is $k$-length vector linear
solvable over $F$, then it is clear that the code can be extended to
a code that solves the multiple-unicast network $C_2( \net )$.
Next, we assume that the multiple-unicast network $C_2( \net )$ is $k$-length
vector linear solvable over $F$, and prove that the sum-network
$\net$ is also $k$-length vector linear solvable over $F$. 
It can be seen by similar arguments as used in the proof
of \cite[Theorem II.1]{dougherty3} that all the terminals can recover
the respective source symbols
if and only if each of the intermediate nodes $r_{ji};j=1,2,\ldots, n,
i=1,2, \cdots, m$ can recover $X_i$.
So, for any given $k$-length vector linear solution for $C_2( \net )$,
the intermediate nodes $r_{ji}$ recover the respective $X_i$.
It implies that there is a path from each source $w_i$
to each terminal $z_j$ in the sum-network.
If $m=2$ or $n=2$, then by the results in \cite{ramamoorthy}, the sum-network is scalar linear
solvable over any field. Now let us assume $m,n \geq 3$.

\label{page:wlog}
In general, for any network and any network code over an alphabet $\cA$,
suppose a node $u$ has a single message vector $Y$, either received
on a single incoming edge or generated locally if the node is a source,
and an outgoing edge $e$ carries a function $f(Y)$. If we replace the function
by the identity function ($e$ now carries $Y$) and the function $f$
precedes any subsequent operation on the received message
at the head (receiver) node of $e$, then clearly the new network
code still serves exactly the same overall function, and the new network code
is an equivalent code. If the original code is a linear code, then the
new code is also a linear code.
Thus, without loss of generality, we assume that
\begin{eqnarray}
&& Y_{(s_i, u_j)} = X_i \mbox{ for } 1\leq i,j \leq m, i \neq j, \nonumber \\
&& Y_{(s_i, w_i)} = X_i \mbox{ for } 1\leq i \leq m, \nonumber \\
&& Y_{(v_i, r_{ji})} = Y_{(u_i, v_i)}  \mbox{ for } 1\leq i \leq m, 1\leq j \leq n, \nonumber \\
&& Y_{(z_j^{\prime}, r_{ji})} = Y_{(z_j, z_j^{\prime})} \mbox{ for } 1\leq i \leq m, 1\leq j \leq n. \nonumber
\end{eqnarray}
Let us also assume that
\begin{eqnarray}\label{eq:encode_klinear_sumtomul}
Y_{(u_i, v_i)} &=& \mathop{\sum_{j=1}^{m}}_{j\neq i} \beta_{ji} Y_{(s_j, u_i)} \mbox{ for } 1\leq i \leq m, \label{smcode10}\\
Y_{(z_i,z_i^{\prime})} &=& \mathop{\sum_{j=1}^{m}} \eta_{ij} X_j  \mbox{ for } 1\leq i \leq n, 
\end{eqnarray}
 where $\beta_{ij}, \eta_{ij} \in F^{k \times k}$.

Let us assume that the symbols recovered at the nodes $r_{ij}$ for
forwarding on the outgoing links are
\begin{eqnarray}\label{eq:decode_klinear_sumtomul}
R_{ij} & = & \gamma_{ij}^{\prime} Y_{(z_i^{\prime}, r_{ij})} + \gamma_{ij}
Y_{(v_j, r_{ij})} \nonumber \\
& = &  \gamma_{ij}^{\prime} \mathop{\sum_{l=1}^{m}} \eta_{il} X_l + \gamma_{ij}
\mathop{\sum_{l=1}^{m}}_{l\neq j} \beta_{lj} Y_{(s_l, u_j)} \nonumber \\
& = & \gamma_{ij}^{\prime} \mathop{\sum_{l=1}^{m}} \eta_{il} X_l + \gamma_{ij}
\mathop{\sum_{l=1}^{m}}_{l\neq j} \beta_{lj} X_l \nonumber\\
& = & \gamma_{ij}^{\prime} \eta_{ij} X_j + \mathop{\sum_{l=1}^{m}}_{l\neq j}
(\gamma_{ij}^{\prime}\eta_{il}+\gamma_{ij}\beta_{lj}) X_l \nonumber
\end{eqnarray}
for $1\leq i \leq n, 1\leq j \leq m$.
Since the node $r_{ij}$ recovers $X_j$, we have
\begin{subequations}\label{eq:code2_klinear_sumtomul}
\begin{eqnarray}
&& \gamma_{ij}^{\prime} \eta_{ij} = I   \mbox{ for } 1\leq i \leq n, 1\leq j \leq m, \label{code17_sumtomul}\\
&& \gamma_{ij}^{\prime}\eta_{il}+\gamma_{ij}\beta_{lj} = 0 \nonumber \mbox{ for } 1\leq i \leq n, 
\,\, 1\leq l,j \leq m,\,\, l \neq j. \label{code18_sumtomul}
\end{eqnarray}
\end{subequations}
It follows from (\ref{code17_sumtomul}) that the matrices
$ \gamma_{ij}^{\prime}$ and $\eta_{ij}$ are invertible for all $i,j$.
Then it also follows from
(\ref{code18_sumtomul}) that the matrices $\gamma_{ij}$ and $\beta_{lj}$ are
also invertible for all $i,j,l$ in their range with $l\neq j$.

We will now prove that $Y_{(z_i, z_i^\prime )}$ for different $i$
are scaled versions of each other. That is, the terminals of the
sum-network recover essentially the same linear combination of
the sources. For this, we need to prove that for any $l,l^\prime$,
$\eta_{il}^{-1}\eta_{il^\prime }$ is independent of $i$.
Let us take a $j\neq l,l^\prime$. This is possible since $m>2$.
Eq. (\ref{code18_sumtomul}) gives 
\begin{eqnarray}
\eta_{il} & = & - \gamma_{ij}^{\prime -1} \gamma_{ij}\beta_{lj}, \nonumber \\
\eta_{il^\prime } & = & - \gamma_{ij}^{\prime -1} \gamma_{ij}\beta_{l^\prime j}. \nonumber 
\end{eqnarray}
These equations give
\begin{eqnarray}
\eta_{il}^{-1}\eta_{il^\prime } = \beta_{lj}^{-1}\beta_{l^\prime j}, \nonumber 
\end{eqnarray}
and so this is independent of $i$. This proves that it is possible
to communicate a fixed linear combination of the sources through
the sum-network $\net$, where each linear coefficient matrix is
invertible. If the sources themselves pre-multiply the source
messages by the inverse of the respective linear coefficient matrix, then the
terminals can recover the sum of the sources. This completes the proof.
\end{proof}

%\noindent
%\input{lin.tex}

\section{Codes for the reverse of a sum-network}
\label{Nonreversibility}

Recall that, given a sum-network $\cN$, its reverse network $\cN^\prime$ is defined 
to be the network with the same set of vertices, the edges reversed
keeping their capacities same, and the role of sources and terminals 
interchanged.
We note that, since $\cN$ may have unequal 
number of sources and terminals, the number of sources 
(resp. terminals) in $\cN$ and that in $\cN^\prime$ may be different. 
We will prove in this section that a sum-network has a $(k,n)$
fractional linear solution if and only if its reverse network has
a $(k,n)$ fractional linear solution. First, in the following,
using a given network code for an arbitrary network, we construct a simple 
network code
for the reverse network and investigate the properties of this new code. 

%First consider two cuts $\chi_1$ and $\chi_2$ in any arbitrary network
%$\net$ and its reverse network $\net^\prime$. 
We will represent a process generated at a source by an incoming
edge at the source, and a process recovered at a terminal by an
edge outgoing from the terminal. For any edge $e$, let us denote
the corresponding edge in the opposite direction in $\net^\prime$
by $\tilde{e}$. If $e$ is a source process, then in $\net^\prime$,
$\tilde{e}$ denotes a recovered process at that terminal, and vice versa.
Consider any $(k,n)$ fractional linear code $\cC$ over a field $F$ for $\net$. 
Let the local coding coefficient for any two adjuscent edges $e,e^\prime$
be denoted by $\alpha_{e,e^\prime}$. Note that the local
coding coefficient for any two adjuscent internal edges is a $n\times n$ matrix;
for a source node, the local coding coefficient between a source process
and an outgoing edge is a $n\times k$ matrix; and for a terminal node,
the local coding coefficient between an incoming edge and a recovered
process is a $k\times n$ matrix. Let us consider a path
$P = e_1e_2\ldots e_t$ in $\net$, and the corresponding reverse path
$\widetilde{P} = \tilde{e}_t\ldots \tilde{e}_2\tilde{e}_1$ in $\net^\prime$.
Here $e_1$ may denote a source process
and $e_t$ may denote a recovered process at a terminal.
The product $\alpha_{e_{t-1},e_t}\cdots \alpha_{e_2,e_3}\alpha_{e_1,e_2}$
is the gain $G_P$ of this path. $G_P$ is a matrix of dimension
$n\times n$ if both $e_1$ and $e_t$ are internal edges of the network,
$k\times n$ if $e_1$ is an internal edge and $e_t$ is a recovered process
at a terminal, $n\times k$ if $e_1$ is a source process and $e_t$ is
an internal edge, and $k\times k$ if $e_1$ is a source process and $e_t$ is
a recovered process at a terminal.

Consider the code $\cC^\prime$ for $\net^\prime$ given by
the local coding coefficients $\beta_{\tilde{e}^\prime , \tilde{e}}
= \alpha_{e,e^\prime}^T$, where `$T$' denotes transpose. 
We call this code the {\em canonical reverse code} of $\cC$.
Under this code, the path gain of $\widetilde{P}$ is 
\begin{eqnarray}
&& G_{\widetilde{P}} = \alpha_{e_1,e_2}^T  \alpha_{e_2,e_3}^T \cdots
\alpha_{e_{t-1},e_t}^T = G_P^T. \label{eq:pgain}
\end{eqnarray}
This code can be shown to be the same as the {\em dual code} defined
by Koetter et al. in \cite{koetter3}. They used this code to show
the equivalence of a network and its reverse entwork under linear solvability
for multiple-unicast networks and multicast-networks. Though
their suggested application for the reverse-multicast network
was the centralized detection of an event sensed by exactly one
of the many sensors deployed in an area, this is obviously a special
application of the resulting sum-network. As explained below, this code
also solves the reverse sum-network if the original code provides a solution
to the original sum-network.

Consider two cuts $\chi_1$ and $\chi_2$ in $\net$. The first
cut may include edges to the sources which correspond to the source
processes, and the second cut may include edges out of the terminals
corresponding to the recovered processes. Let the edges in $\chi_1$ and
$\chi_2$ be $e_{11},e_{12},\ldots, e_{1r}$ and $e_{21},e_{22},\ldots, e_{2l}$
respectively. The transfer matrix $T_{\chi_1 \chi_2}$ between the
two cuts relates the messeges carried by the two cuts as 
\begin{eqnarray}
\left[\begin{array}{c} Y_{e_{21}}\\ Y_{e_{22}} \\ \vdots \\ Y_{e_{2l}}
\end{array}\right] = 
T_{\chi_1 \chi_2} \left[\begin{array}{c} Y_{e_{11}}\\ Y_{e_{12}} \\ \vdots \\ Y_{e_{1r}}
\end{array}\right].
\end{eqnarray}
Note that for each $i,j$, $Y_{e_{ij}}$ itself is a column vector of
dimension $n$ or $k$ depending on whether it is an internal edge or
an edge for a source process or a process recovered at a terminal.
The transfer matrix has appropriate dimension depending on the dimensions
of the column vectors on the left hand side and right hand side. It is convenient to view
$T_{\chi_1 \chi_2}$ as a $l\times r$ matrix of blocks of appropriate
sizes. The $(i,j)$-th block element of the matrix has dimension
$dim(Y_{e_{2i}})\times dim(Y_{e_{1j}})$. The $(i,j)$-th block is
the sum of the gains of all paths in the network from $e_{1j}$ to
$e_{2i}$. By (\ref{eq:pgain}), each path gain is transposed in the
reverse network under the code $\cC^\prime$. So, the transfer matrix
from the cut $\chi_2$ to the cut $\chi_1$ of $\net^\prime$ under the code
$\cC^\prime$ is given by
\begin{eqnarray}
&& \widetilde{T}_{\chi_2\chi_1} = T_{\chi_1 \chi_2}^T. \label{eq:trmat}
\end{eqnarray}
Now consider a generic sum-network $\cN$ depicted in Fig. \ref{fig:sumnet}.
Consider the {\it source-cut} $\chi_s$ and the {\it terminal-cut}
$\chi_t$ shown in the figure.
The transfer matrix from $\chi_s$ to $\chi_t$ 
is an $l\times m$  block matrix $T_{\chi_s\chi_t}$ with each block
of size $k\times k$ over $F$. It relates the vectors $\bX = (X_1^T, X_2^T, \ldots, X_m^T)^T$ and $\bR = (R_1^T, R_2^T, \ldots, R_l^T)^T$ as
$\bR = T_{\chi_s\chi_t}\bX$. Here $X_i, R_j$ are all column vectors of
length $k$.
The $(i,j)$-th element (`block' for vector linear coding) of the
transfer matrix is the sum of the path gains of all paths from
$X_j$ to $R_i$.
A $(k,n)$ network code provides a rate $k/n$ solution for the sum-network
if and only if this transfer matrix is the all-identity matrix, i.e., if
\begin{eqnarray}
&& T_{\chi_s\chi_t} = \left[\begin{array}{cccc}
I_k & I_k & \cdots & I_k \\
I_k & I_k & \cdots & I_k \\
\vdots & \vdots & \ddots & \vdots \\
I_k & I_k & \cdots & I_k \end{array}\right]_{lk\times mk}.
\label{eq:sum_trmat}
\end{eqnarray}
Clearly transposition of this matrix preserves the same
structure.
For a multiple-unicast network on the other hand, 
the transfer matrix between the source-cut and the terminal-cut 
given by a $(k,n)$ fractional linear code is
\begin{eqnarray}
&& T_{\chi_s\chi_t} = \left[\begin{array}{cccc}
I_k & {\bf 0}_k & \cdots & {\bf 0}_k \\
{\bf 0}_k & I_k & \cdots & {\bf 0}_k \\
\vdots & \vdots & \ddots & \vdots \\
{\bf 0}_k & {\bf 0}_k & \cdots & I_k \end{array}\right]_{mk\times mk},
\label{eq:mult_trmat}
\end{eqnarray}
where ${\bf 0}_k$ denotes the $k\times k$ all-zero matrix.
This matrix is symmetric, and so is invariant under transposition.
Since the desired transfer matrix for a $(k,n)$ fractional linear code
for a sum-network $\net$ and its reverse network $\net^\prime$ are the
transpose of each other, our canonical reverse code construction gives
the following lemma by (\ref{eq:trmat}).

\begin{lemma}
\label{lem:rev_eq}
A sum-network $\cN$ has a $(k,n)$ fractional linear solution
if and only if the reverse network $\cN^\prime$ also has a
$(k,n)$ fractional linear solution.
\end{lemma}
The following theorem follows directly from Lemma~\ref{lem:rev_eq}.
\begin{theorem}
\label{th:reverse}
A sum-network and its reverse network have the same linear
coding capacity.
\end{theorem}

These results also hold for any linear
function over a commutative ring as long as the linear coefficients of
the function are invertible.
Lemma \ref{lem:rev_eq} shows in particular that a sum-network and its
reverse network are solvably equivalent under fractional linear coding.
For the special case of $\min\{m,l\} \leq 2$, this was also shown in
 \cite{ramamoorthy}.
The same result also follows for multiple-unicast network from
(\ref{eq:trmat}) and (\ref{eq:mult_trmat}), and this was proved
in \cite{koetter3, riis1}.

If non-linear coding is allowed, then a multiple-unicast network
was given in \cite{dougherty3} which is solvable though
its reverse multiple-unicast network is not solvable over any finite alphabet.
Now consider the sum-network obtained by using Construction $C_1$ on
this network. By Theorem \ref{thm:SN_using_MUN} (ii) and (iv),
it follows that this network allows a non-linear coding
solution whereas its reverse sum-network does not have a solution
over any abelian group.
So, we have,

\begin{theorem}
\label{lem:nonreversible_sum_network}
There exists a solvable sum-network whose reverse network is not solvable
over any finite alphabet.
\end{theorem}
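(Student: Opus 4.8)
The plan is to combine three ingredients already established in the excerpt: (1) the result of Dougherty et al.\ \cite{dougherty3} that there exists a solvable multiple-unicast network $\net_0$ whose reverse multiple-unicast network is not solvable over any finite alphabet; (2) Construction ${\bf C}_1$, which attaches to any multiple-unicast network $\net_0$ a sum-network $C_1(\net_0)$; and (3) Theorem \ref{thm:SN_using_MUN}, parts (ii) and (iv), which state respectively that $C_1(\net_0)$ is solvable over an abelian group $G$ if and only if $\net_0$ is solvable over $G$, and that the reverse sum-network $C_1(\net_0)^\prime$ is solvable over $G$ if and only if the reverse multiple-unicast network $\net_0^\prime$ is solvable over $G$. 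The conclusion is then essentially immediate.

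First I would fix $\net_0$ to be the solvable multiple-unicast network of \cite{dougherty3} whose reverse is not solvable over any finite alphabet, and set $\net_2 = C_1(\net_0)$ (this is the network depicted in Fig. \ref{fig:10}). Since $\net_0$ is solvable, say over some finite abelian group $G_0$, Theorem \ref{thm:SN_using_MUN}(ii) gives that $\net_2$ is solvable over $G_0$; in particular $\net_2$ is a solvable sum-network. Next, suppose for contradiction that the reverse sum-network $\net_2^\prime$ were solvable over some finite alphabet; we may take the alphabet to be a finite abelian group $G$ (an abelian group structure is anyway required to even define the sum demands). Then Theorem \ref{thm:SN_using_MUN}(iv) would force $\net_0^\prime$ to be solvable over $G$, contradicting the choice of $\net_0$. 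Hence $\net_2^\prime$ is not solvable over any finite alphabet, which is exactly the assertion of Theorem \ref{lem:nonreversible_sum_network}.

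I do not anticipate a genuine obstacle here, since the heavy lifting is done by Theorem \ref{thm:SN_using_MUN} (proved in the appendix) and by the cited construction from \cite{dougherty3}. The one point that deserves a sentence of care is the quantifier ``over any finite alphabet'': the non-solvability in \cite{dougherty3} is stated over any finite alphabet, and Theorem \ref{thm:SN_using_MUN}(iv) is stated over an arbitrary abelian group $G$, so one should note that any finite alphabet for a sum-network carries (indeed must carry) an abelian group structure, so the two ``any finite alphabet'' clauses match up and the contrapositive of part (iv) applies uniformly. With that remark in place the proof is a two-line deduction, so I would keep the write-up correspondingly short.

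\begin{proof}
Let $\net_0$ be the solvable multiple-unicast network of \cite{dougherty3} whose reverse network $\net_0^\prime$ is not solvable over any finite alphabet, and let $\net_2 = C_1(\net_0)$ be the sum-network obtained from $\net_0$ by Construction ${\bf C}_1$ (Fig. \ref{fig:10}). Since $\net_0$ is solvable over some finite abelian group, Theorem \ref{thm:SN_using_MUN}(ii) shows that $\net_2$ is solvable over the same group. Now suppose $\net_2^\prime$ were solvable over some finite alphabet. Because the recovery of a \emph{sum} is only defined when the alphabet carries an abelian group structure, this alphabet is a finite abelian group $G$, and by Theorem \ref{thm:SN_using_MUN}(iv) the reverse multiple-unicast network $\net_0^\prime$ would be solvable over $G$, contradicting the choice of $\net_0$. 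Hence $\net_2$ is a solvable sum-network whose reverse network $\net_2^\prime$ is not solvable over any finite alphabet.
\end{proof}
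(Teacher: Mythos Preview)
Your proposal is correct and follows essentially the same approach as the paper: take the Dougherty--Zeger multiple-unicast network from \cite{dougherty3}, apply Construction ${\bf C}_1$, and invoke Theorem \ref{thm:SN_using_MUN}(ii) and (iv) to transfer solvability and non-reversibility to the resulting sum-network. Your extra remark about the alphabet necessarily carrying an abelian group structure is a helpful clarification but not a departure from the paper's argument.
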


%\begin{center}
%\begin{figure*}[t]
%\centering\includegraphics[width=7.0in]{N21.eps}
%\caption{A nonreversible sum-network $\net_{10}$. The fat arrows
%incoming to the nodes $u_i$ denote connections from all the sources
%except $s_i$}\label{fig:10}
%\end{figure*}
%\end{center}

\section{Systems of polynomial equations and sum-networks}
\label{Equivalence}
It was shown in \cite{dougherty2} that for any collection of 
polynomials having integer coefficients, there exists a directed acyclic 
network, and thus also a multiple-unicast network, which is scalar
linear solvable over $F$ if and only if the polynomials have a common
root in $F$. By using Construction $C_1$, it then follows that
the same also holds for sum-networks. Thus we have

\begin{theorem}
\label{thm:polynomialcollection_sum_network}
For any system of integer polynomial equations, there exists a sum-network
which is scalar linear solvable over a finite field $F$ if and only if the
system of polynomial equations has a solution in $F$.
\end{theorem}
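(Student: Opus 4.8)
The plan is to obtain the sum-network by composing three ``if and only if'' statements, two borrowed from the literature and one supplied by Construction $\mathbf{C_1}$ (or $\mathbf{C_2}$) of this paper. First I would invoke the theorem of Dougherty et al.~\cite{dougherty2}: given the system of integer polynomial equations, there is a directed acyclic (Type~I) network $\net_0$ that is scalar linear solvable over a finite field $F$ if and only if the system has a common root in $F$. Second, I would pass from $\net_0$ to a multiple-unicast network: by~\cite{dougherty3}, every directed acyclic network admits a scalar-linear-solvably equivalent multiple-unicast network $\net_1$, so $\net_1$ is scalar linear solvable over $F$ precisely when the polynomial system has a solution in $F$.

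Third, I would feed $\net_1$ into Construction $\mathbf{C_1}$ to produce the sum-network $\net_2$. By Corollary~\ref{coro:scalarlinear}(i), $\net_2$ is scalar linear solvable over $F$ if and only if $\net_1$ is. Chaining the three equivalences yields that $\net_2$ is scalar linear solvable over $F$ exactly when the polynomial system has a solution in $F$, which is the assertion. As an alternative route that avoids the intermediate appeal to~\cite{dougherty3}, one could apply Construction $\mathbf{C_2}$ directly to the Type~I network $\net_0$ and use Corollary~\ref{coro:scalarlinear1} in place of Corollary~\ref{coro:scalarlinear}; either construction gives the result.

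There is essentially no technical obstacle beyond assembling the pieces: the substantive work has already been done in Theorem~\ref{thm:SN_using_MUN} (whence Corollary~\ref{coro:scalarlinear}) and in the cited papers. The one point deserving a line of care is that the composition must respect a fixed but arbitrary $F$ at every link in the chain. Each of the three equivalences is of the form ``solvable over $F$ $\iff$ \ldots over $F$'' and holds uniformly for every finite field $F$; hence the composite equivalence is valid simultaneously for all $F$, which is exactly what the theorem claims.
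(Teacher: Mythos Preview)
Your proposal is correct and is essentially identical to the paper's own argument: the paper likewise observes that from \cite{dougherty2} (together with \cite{dougherty3}) one obtains a multiple-unicast or Type~I network scalar-linear-solvably equivalent to the given polynomial system, and then applies Construction~$\mathbf{C_1}$ or $\mathbf{C_2}$ to obtain the desired sum-network. Your remark that the chain of equivalences holds uniformly for every fixed finite field $F$ is the only point worth making explicit, and you have done so.
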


For a specific class of networks, for example multicast networks, there
may not exist a network corresponding to any system of polynomial equations.
For example, for the class of multicast networks, there is no network
which is solvably equivalent to the polynomial equation $2X=1$. This is
because, the polynomial equation has a solution only over fields of
characteristic not equal to $2$. Whereas, if a multicast network is
solvable over any field, then it is also solvable over large enough fields
of characteristic $2$. Theorem~\ref{thm:polynomialcollection_sum_network}
thus affirms the broadness of sum-networks as a class.

%\begin{center}
%\begin{figure*}[t]
%\centering\includegraphics[width=7.0in]{N7.eps}
%\caption{A network for the polynomial $P(x)= p_1p_2\ldots p_l$
%obtained using Construction ${\bf C}_2$.}
%\label{fig:6}
%\end{figure*}
%\end{center}

\subsection{Networks with finite and cofinite characteristic sets}
\label{sec:finitecofinite}
The equivalence in Theorem~\ref{thm:polynomialcollection_sum_network}, and the corresponding original result on
communication networks hold under scalar network coding. In the following,
we present two classes of networks which are vector linear solvable for
any vector length if and only if the characteristic of the alphabet field
belongs to a given set of primes. Equivalently, this gives sum-networks
which are equivalent to special classes of polynomials under vector linear
network coding of any dimension. Thus for this classes of polynomials,
a stronger equivalence holds with these networks.

A constant polynomial $P(x)=p_1p_2\ldots p_l$ where $p_1, p_2, \ldots,
p_l$ are some prime numbers, has a solution over a field $F$ if and only
if the characteristic of $F$ is one of $p_1, p_2, \ldots, p_l$.
Theorem~\ref{thm:polynomialcollection_sum_network}
ensures the existence of a sum-network which is scalar linear solvable
only over such fields. Such a network can be constructed using the
algorithm given in \cite{dougherty2} and Construction $C_1$.
In Fig. \ref{fig:spcm}, we show a much simpler and directly constructed network
which satisfies the above property under vector linear coding of any
dimension. Here $m=p_1 p_2 \ldots p_l +2$.

\begin{theorem}[Finite characteristic set]
\label{thm:main}
For any finite, possibly empty, set $\cP = \{p_1, p_2, \ldots, p_l\}$
of positive prime numbers, there exists a directed acyclic network
of unit-capacity edges so that for any positive integer $n$, 
the network is $n$-length vector linear solvable 
if and only if the characteristic of the alphabet field belongs to $\cP$.
\end{theorem}
The proof is given in Appendix \ref{app1}.

Taking the empty set as $\cP$, we get the network $\spc_3$ shown in
Fig. \ref{fig:spc3}.
This is not solvable over any alphabet field for any vector dimension.
This network was also found independently by Ramamoorthy and 
Langberg~(\cite{ramamoorthy1}).

\begin{figure*}[t]
\begin{minipage}[b]{0.6\linewidth}
\centerline{\includegraphics[width=3.0in]{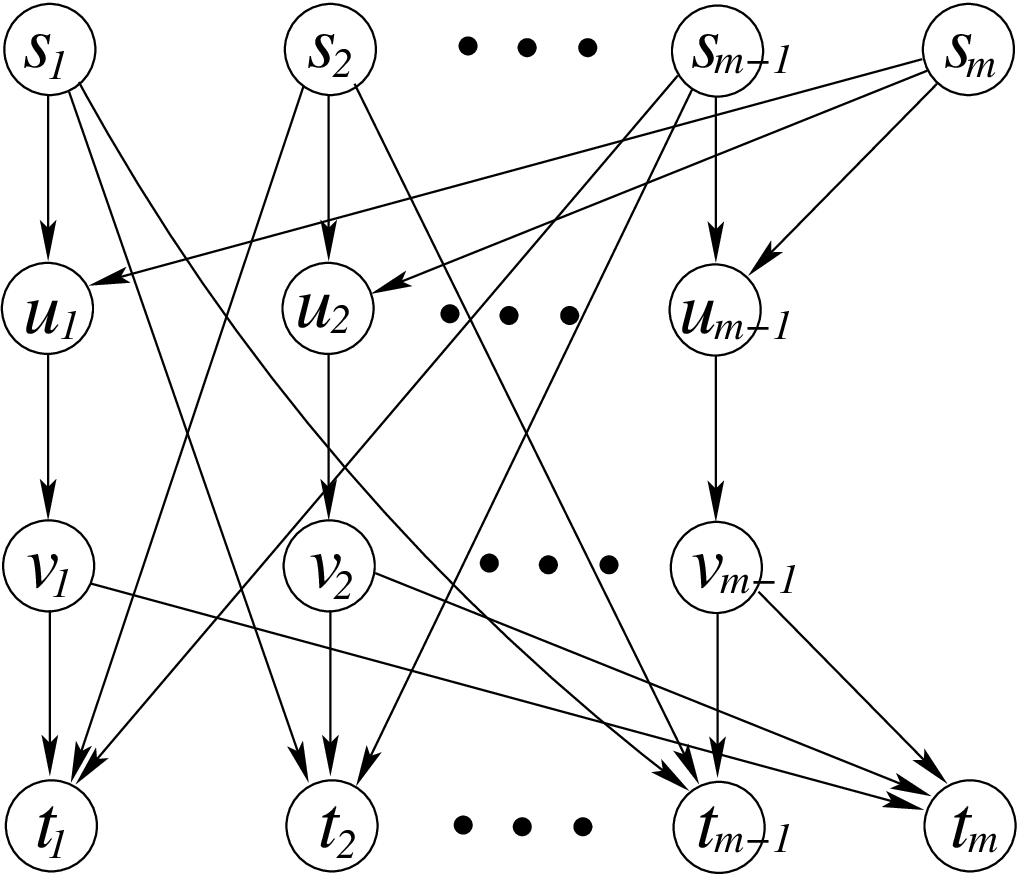}}
\caption{The network $\spc_m$}
\label{fig:spcm}
\end{minipage}
\hspace{.3cm}
\begin{minipage}[b]{0.32\linewidth}
\centerline{\includegraphics[width=1.8in]{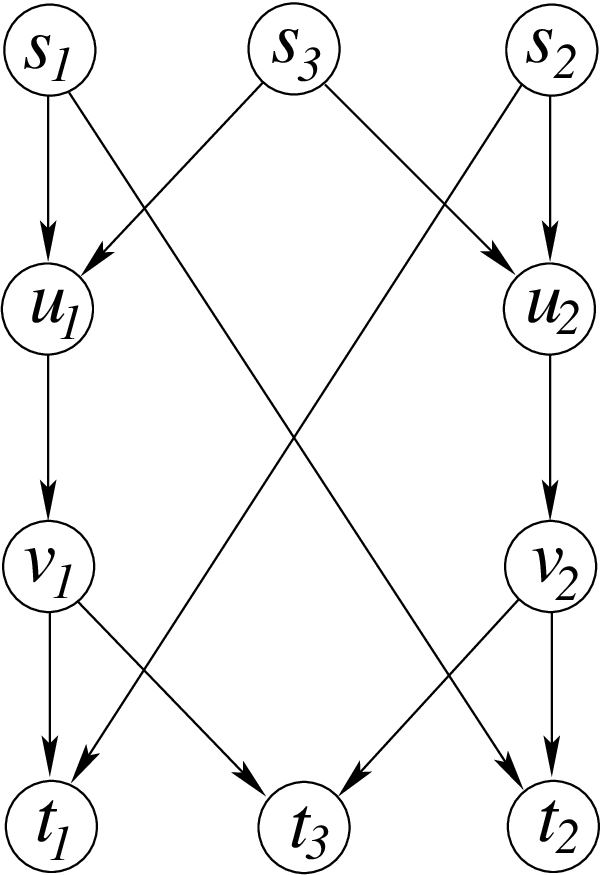}}
\caption{A sum-network $\spc_3$ which is not solvable over any field}
\label{fig:spc3}
\end{minipage}
\end{figure*}

%\begin{center}
%\begin{figure*}[t]
%\centering\includegraphics[width=7.0in]{N18.eps}
%\caption{A network for the polynomial $P(x)=(p_1p_2\ldots p_l)x-1$ obtained
%using construction ${\bf C}_2$}
%\label{fig:7}
%\end{figure*}
%\end{center}

\begin{figure*}[t]
\begin{minipage}[b]{0.6\linewidth}
\centerline{\includegraphics[width=3.0in]{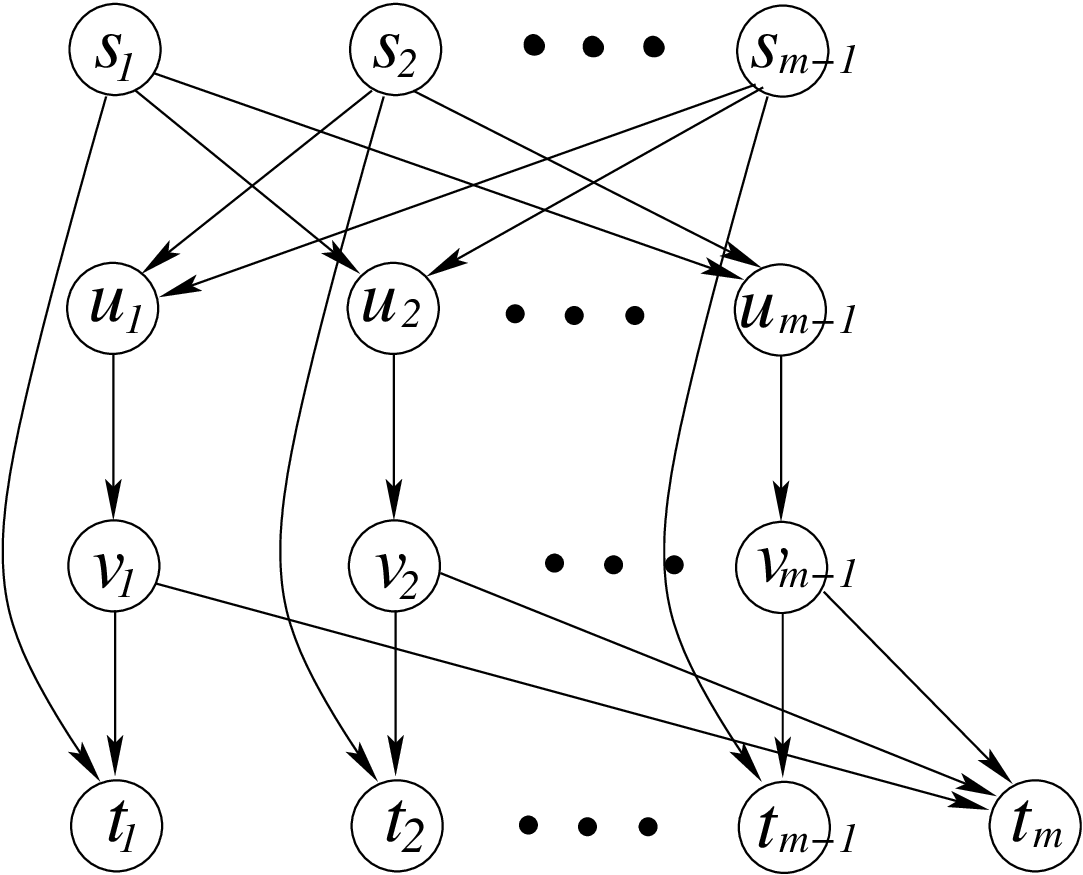}}
\caption{The network $\spc_m^*$}
\label{fig:spcm*}
\end{minipage}
\hspace{.3cm}
\begin{minipage}[b]{0.35\linewidth}
\centerline{\includegraphics[width=2.5in]{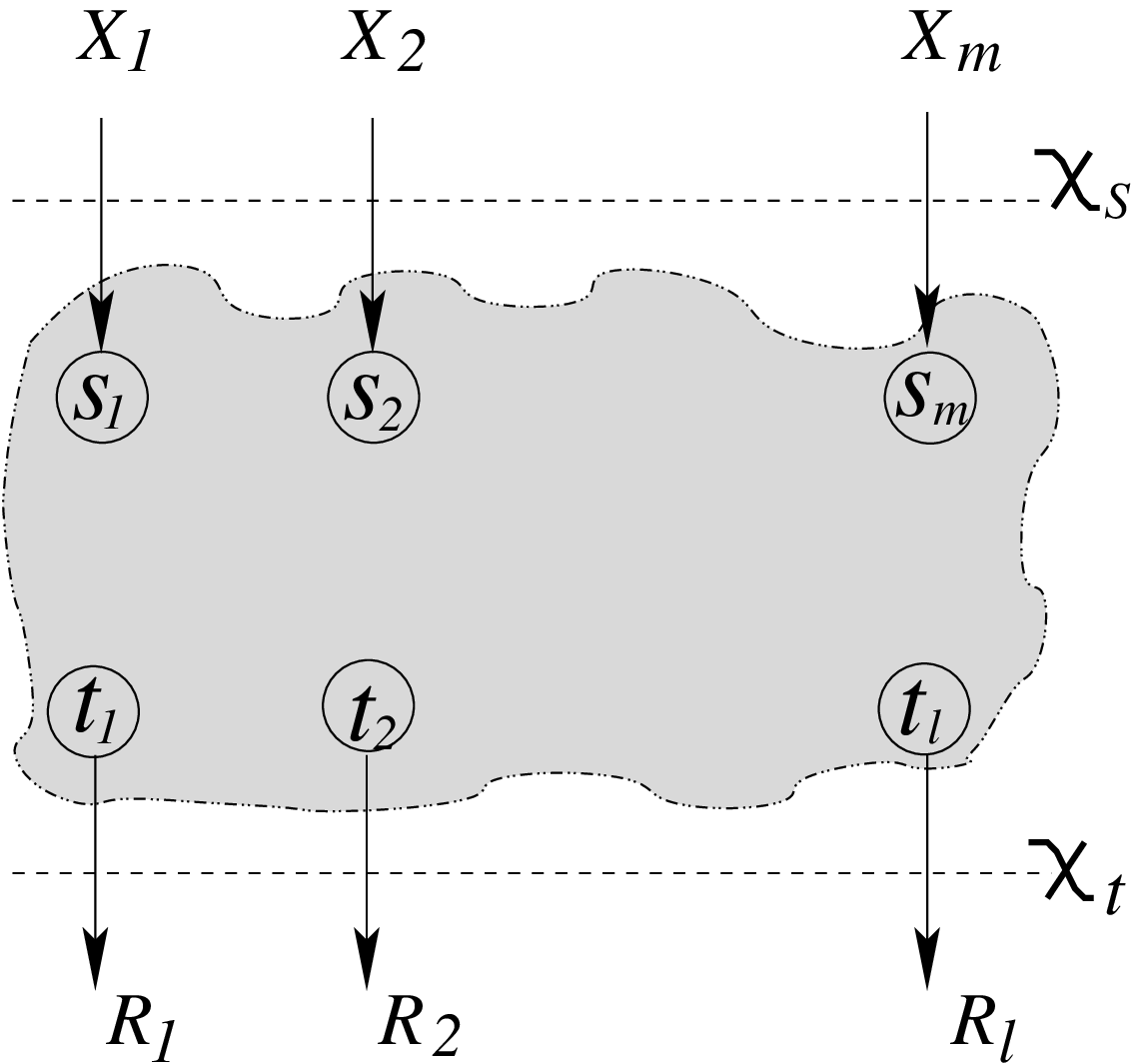}} \caption{The
source-cut and the terminal-cut of a
generic sum-network } \label{fig:sumnet}
\end{minipage}
\end{figure*}

The polynomial
$P(x)=(p_1p_2\ldots p_l)x-1$, where $p_1, p_2, \ldots, p_l$ are some
prime numbers, has a solution only over fields of characteristic not
equal to any of $p_1, p_2, \ldots, p_l$.
In Fig. \ref{fig:spcm*}, we show a network which, for any positive integer $n$,
 has a $n$-length vector linear solution over exactly those fields.
\begin{theorem}[Cofinite characteristic set]
\label{thm:main_cofinite}
For any finite set $\cP = \{p_1, p_2, \ldots, p_l\}$
of positive prime numbers, there exists a directed acyclic sum-network
of unit-capacity edges so that for any positive integer $n$, the network
is $n$-length vector linear solvable if and
only if the characteristic of the alphabet field does not belong to $\cP$.
\end{theorem}
The proof is given in Appendix \ref{app2}.

The polynomials considered in this subsection are very special in nature.
Their solvability over a field $F$ depends only on the characteristic
of $F$. Only for such a system of polynomials, it is possible to find
a network which is solvably equivalent under vector linear coding
of any dimension. This is because, otherwise if the polynomials have
a solution over a finite extension $K$ of $F$ even though they
do not have any solution over $F$, then a scalar linear solvably
equivalent network has a scalar solution over $K$, but not
over $F$. This implies that the network also has a vector
solution over $F$ of dimension $\dim_F K$ even though
it does not have a scalar solution over $F$.

We note that Theorems~\ref{thm:main} and \ref{thm:main_cofinite}
are stronger than corresponding known results~(\cite{dougherty2}) for communication
networks.
However, by applying Construction $C_2$ on the sum-networks $\spc_m$ and $\spc_m^\star$,
we can also get communication networks which satisfy the properties in
Theorems~\ref{thm:main} and \ref{thm:main_cofinite} respectively.

\section{Some consequences of the equivalence results}
\label{sec:conseq}
\subsection{Insufficiency of linear network coding for sum-networks}
\label{Insufficiency}

It was shown in \cite{dougherty1} that linear network coding may not
be sufficient for communication networks in the sense that a rate may be
achievable by non-linear coding even though the same rate may not be
achievable by linear coding over any field and of any vector dimension.
Such a communication network was presented in \cite{dougherty1}.
It was proved that for this network, there is a non-linear solution
over the ternary alphabet $F_3$ even though
there is no linear solution over any finite field for any vector
dimension. We can get an equivalent multiple-unicast network 
which also satisfies the same property.
Then the sum-network obtained using Construction $C_1$ on this multiple-unicast
network satisfies the same property by Theorem~\ref{thm:SN_using_MUN}.
So for the resulting sum-network, rate $1$ is achievable over $F_3$ using
non-linear coding but not using vector linear coding of any dimension.
So, we have,

%\begin{center}
%\begin{figure*}[t]
%\centering\includegraphics[width=7.0in]{N19.eps}
%\caption{A sum-network for which linear coding is insufficient }
%\label{fig:8}
%\end{figure*}
%\end{center}

\begin{theorem}
\label{thm:linsufficient_sum_networks1}
There exists a sum-network with the sum defined over a finite field
which is solvable by non-linear network coding, but not solvable
using vector linear coding for any vector dimension.
\end{theorem}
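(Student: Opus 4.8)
The plan is to invoke Construction $C_2$ together with Theorem~\ref{thm:SN_using_GEN} applied to the specific Type~I network from \cite{dougherty1}. First I would recall the network $\net_3$ exhibited in \cite{dougherty1} (drawn in the dashed box of Fig.~\ref{fig:8}): it has three source processes $a,b,c$ generated at the top nodes, and its terminals demand the processes indicated in the figure. The key fact established in \cite{dougherty1} is that this network is solvable over the ternary alphabet $F_3$ by a \emph{non-linear} code, but it admits \emph{no} linear solution over any finite field and any vector dimension. Because the sum to be recovered in the constructed sum-network is the sum over $F_3$, the resulting network is ``a sum-network with the sum defined over a finite field,'' as required by the statement.

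Next I would apply Construction $C_2$ to $\net_3$, obtaining the sum-network $C_2(\net_3) = \net_4$. By Theorem~\ref{thm:SN_using_GEN}(ii), $\net_4$ is solvable over the abelian group $(F_3,+)$ if and only if $\net_3$ is solvable over $F_3$; since the latter holds (non-linearly), $\net_4$ is solvable. This gives the first half of the claim: $\net_4$ is a solvable sum-network. For the second half, Theorem~\ref{thm:SN_using_GEN}(i) says $\net_4$ is $k$-length vector linear solvable over a finite field $F$ if and only if $\net_3$ is $k$-length vector linear solvable over $F$. Since $\net_3$ has no linear solution over any finite field for any vector dimension $k$, the same is true of $\net_4$. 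Hence $\net_4$ is a solvable sum-network (with the sum over $F_3$) that is not solvable by linear coding of any vector dimension over any field, which is exactly the assertion of Theorem~\ref{thm:linsufficient_sum_networks1}.

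There is essentially no new obstacle here: the entire content has been pushed into the two cited ingredients, namely the existence of the network $\net_3$ with the stated linear/non-linear separation (from \cite{dougherty1}) and the strong solvable-equivalence of Construction $C_2$ (Theorem~\ref{thm:SN_using_GEN}). The only point one should be careful about is that Theorem~\ref{thm:SN_using_GEN}(i) as stated concerns vector \emph{linear} solvability over a field $F$ of a \emph{fixed} block length $k$; to conclude insufficiency of linear coding ``for any vector dimension'' one must quote it for every $k$ and every finite field $F$ simultaneously, which is immediate since the equivalence in the theorem is uniform in both $k$ and $F$. One might also remark, as the surrounding text does, that this separation can be phrased in terms of achievable rates: over $F_3$, rate $1$ is achievable for $\net_4$ by non-linear coding but not by linear coding of any dimension, which is a strictly stronger statement than mere non-solvability and follows from the same argument.
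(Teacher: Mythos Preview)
Your proposal is correct and follows essentially the same approach as the paper: apply Construction $C_2$ to the Dougherty--Freiling--Zeger network of \cite{dougherty1}, then invoke Theorem~\ref{thm:SN_using_GEN}(ii) for solvability over $F_3$ and Theorem~\ref{thm:SN_using_GEN}(i) (for every $k$ and every $F$) for the absence of any vector linear solution. Your closing remark about rate $1$ being achievable non-linearly but not linearly over $F_3$ also matches the paper's concluding sentence for this section.
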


\begin{remark}
It was also shown in \cite{dougherty1} that the constructed
communication network is not solvable using linear network coding
over any finite commutative ring $R$ with identity, $R$-module, or
by using more general forms of the linear network codes defined
in~\cite{jaggi2}. The same results hold for the resulting sum-network
with the additional constraint that the $R$-module needs to have
the annihilator $\{0\}$ for Theorem~\ref{thm:SN_using_MUN}(i) to hold.
\end{remark}

\subsection{Unachievability of network coding capacity of sum-networks}
\label{Unachievability}

It is known that
the network coding capacity of a communication network is independent of
the alphabet~(\cite{cannons1}). Given that the majority of the network coding
literature considers zero-error recovery and noiseless link models, it is
of interest
to know if there always exists a code achieving the capacity exactly, unlike
achieving a rate close to the capacity. In
\cite{dougherty4}, a communication network (let us call it $\net_1$) was
presented for which the network coding capacity is $1$ but there is no
code of rate $\geq 1$ over any finite alphabet. 
We now argue that there also exists a sum-network whose 
network coding capacity is not achievable over any abelian group.
Consider the sum-network $\net_2$ obtained by first taking the 
equivalent multiple-unicast network of $\net_1$, and then using
Construction $C_1$ on it. 
By Theorem \ref{thm:SN_using_MUN}, rate $1$ is not achievable in
$\net_2$. But it follows from Corollary~\ref{cor:capacity} that the capacity of
$\net_2$ is $1$.
This gives the following theorem.

\begin{theorem}
\label{lem:capacity_unachievable_sum_network}
There exists a sum-network whose network coding capacity is not 
achievable.\footnote{Recall that in our definition of achievability,
a rate is achievable if there is a code achieving exactly
that rate or higher.}
\end{theorem}

\section{Discussion}
\label{disc}
The primary message of this paper is that sum-networks form
as broad a class of networks as communication networks. Explicit
construction of sum-networks from a given multiple-unicast
network, and vice-versa,  preserves the solvability of the
networks. A sum-network and its reverse network are also
shown to be linear solvably equivalent under fractional linear network coding.
The solvably equivalent
constructions of various types of networks from other networks enabled
proving similar results, e.g., equivalence with polynomial systems,
unachievability of capacity, and insufficiency of linear coding
about sum-networks by using their known counterparts for communication networks.
These equivalent constructions also prove the difficulty of designing network
codes for sum-networks from similar results on communication networks.

\section{Acknowledgment}
The authors thank Kenneth Zeger for
his feedback on an earlier version of this work and particularly for
pointing out other works of similar flavor.
The authors would like to thank Muriel M\'{e}dard for bringing
\cite{koetter3} to their attention and pointing out a possible
connection with the code-construction for the reverse network
in Section \ref{Nonreversibility}.
The authors are grateful to the Associate Editor and the anonymous
referees for their
constructive comments which helped improve the presentation of the
paper significantly.
This work was supported in part by Tata Teleservices 
IIT Bombay Center of Excellence in Telecomm (TICET). The work of
B.~K.~Dey was also supported in part by a fund from the Department of
Science and Technology, Government of India.

\bibliographystyle{ieeetr}

\appendices
\section{Proof of Theorem \ref{thm:SN_using_MUN}}
\label{th1proof}
\textbf{Proof of part $(i)$: }\ First, if the 
multiple-unicast network $\net_1$ is $\bl$-length 
vector linear solvable over $F$ then such a code can be extended to a
$n$-length vector linear code of $\net_2$ by assigning the
$n\times n$ identity matrix to all the other local
coding matrices. Clearly this gives a required solution
for $\net_2$.

Now we prove the converse. We denote the $\bl$-length symbol vector
carried by an edge $e$ by $Y_e$ as in $(\ref{code4})$. 
By the same arguments as in the proof of Theorem~\ref{thm:MUL_using_SUM} in page~\pageref{page:wlog}, without loss of generality, we assume that
\begin{eqnarray}
&& Y_{(v_i, t_{L_i})} = Y_{(v_i, t_{R_i})} = Y_{(u_i, v_i)} \mbox{ for } 1\leq i \leq m, \nonumber \\
&& Y_{(s_i, t_{R_i})} = Y_{(s_i, w_i)} = X_i \mbox{ for } 1\leq i \leq m, \nonumber \\
\mbox{ and } && Y_{(s_i, u_j)} = X_i \text{ for } 1 \leq i \leq m+1, 1 \leq j \leq m, i\neq j. \nonumber
\end{eqnarray}

Let us also assume that
\begin{subequations}\label{eq:encode_klinear}
\begin{eqnarray}
Y_{(u_i, v_i)} &=& \mathop{\sum_{j=1}^{m+1}}_{j\neq i} \beta_{j,i} Y_{(s_j, u_i)} \mbox{ for } 1\leq i \leq m, \nonumber \\
%\label{code10}  \\
Y_{(z_i, t_{L_i})} &=& \mathop{\sum_{j=1}^{m}}
\beta_j^{i} X_j \mbox{ for } 1\leq i \leq m, \nonumber
% \label{code11},
\end{eqnarray}
\end{subequations}
and the decoded symbols at the terminals $t_{L_i}$ and $t_{R_i}$ are
respectively
\begin{subequations}\label{eq:decode_klinear}
\begin{eqnarray}
& R_{L_i}  = & \gamma_{i,L} Y_{(z_i, t_{L_i})}
+ \gamma_{i,L}^{\prime} Y_{(v_i, t_{L_i})} 
 \mbox{ for } 1\leq i \leq m, \nonumber \\
%\label{code12} \\
\text{ and } & R_{R_i}  = & \gamma_{i,R} Y_{(s_i, t_{R_i})}
+ \gamma_{i,R}^{\prime} Y_{(v_i, t_{R_i})} 
 \mbox{ for } 1\leq i \leq m. \nonumber 
%\label{code13}
\end{eqnarray}
\end{subequations}
Here all the coding coefficients are $\bl \times \bl$ matrices over $F$. 
From $(\ref{eq:encode_klinear})$ and $(\ref{eq:decode_klinear})$, we have 
\begin{subequations}\label{eq:decode2_klinear}
\begin{eqnarray}
R_{L_i} & = & \gamma_{i,L} \mathop{\sum_{j=1}^{m}}\beta_j^{i} X_j
+ \gamma_{i,L}^{\prime} \mathop{\sum_{j=1}^{m+1}}_{j\neq i} \beta_{j, i}X_j 
 \mbox{ for } 1\leq i \leq m, \nonumber \\
%\label{code16} \\
R_{R_i} & = & \gamma_{i,R} X_i + \gamma_{i,R}^{\prime} \mathop{\sum_{j=1}^{m+1}}_{j\neq i} \beta_{j,i}X_j 
 \mbox{ for } 1\leq i \leq m. \nonumber 
%\label{code17}
\end{eqnarray}
\end{subequations}

By assumption,
$R_{L_i}=R_{R_i}=\sum_{j=1}^{m+1} X_j$ for
$1\leq i \leq m$. So $(\ref{eq:decode2_klinear})$ implies
\begin{subequations}\label{eq:code2_klinear}
\begin{eqnarray}
&& \gamma_{i, L}^{\prime} \beta_{m+1, i} = I \mbox{ for } 1\leq i \leq m, \label{code18}\\
&& \gamma_{i, L} \beta_{i}^{i} = I \mbox{ for } 1\leq i \leq m, \label{code19}\\
&& \gamma_{i, L}\beta_j^{i}+\gamma_{i,L}^{\prime}\beta_{j,i} = I \nonumber \mbox{ for } i, j=1, 2, \ldots, m, j\neq i, \label{code20}\\
&& \gamma_{i, R} = I \mbox{ for } 1\leq i \leq m, \label{code21}\\
&&\gamma_{i, R}^{\prime}\beta_{j, i} = I \nonumber \mbox{ for } 1\leq i \leq m, 1 \leq j \leq m+1, j\neq i. \label{code23}
\end{eqnarray}
\end{subequations}

All the coding matrices in (\ref{code18}), (\ref{code19}),
(\ref{code21}) and (\ref{code23}) are invertible
since the right hand side of the equations are the identity matrix.
Eq. $(\ref{code18})$ and $(\ref{code23})$ together imply 
\begin{eqnarray}
\gamma_{i,L}^{\prime} = \gamma_{i,R}^{\prime} \mbox{ for } 1\leq i \leq m. \label{code24}
\end{eqnarray}

By $(\ref{code23})$ and $(\ref{code24})$, we have 
\begin{eqnarray}
\gamma_{i,L}^{\prime}\beta_{j,i} = I \mbox{ for } 1\leq i,j \leq m, j\neq i. \label{code25}
\end{eqnarray}

By $(\ref{code20})$ and $(\ref{code25})$, we have 
\begin{eqnarray}
\gamma_{i,L} \beta_j^{i} = {\bf 0} \ \mbox{   for } 1\leq i,j \leq m, \ j\neq i,\nonumber 
% \label{code26}
\end{eqnarray}
where ${\bf 0}$ denotes the all-zero $\bl \times \bl$ matrix.
Since $\gamma_{i,L}$ is invertible by (\ref{code19}), $\beta_j^{i}=0$ for 
$1\leq i,j \leq m, \ j\neq i$. 
By (\ref{code19}), $\beta_i^{i}$ is an invertible matrix for $1\leq i \leq m$. 
So, we conclude that for every $i=1,\ldots,m$, $Y_{(z_i,t_{Li})}=
\beta_i^{i} X_i$.
This implies that for every $i=1,\ldots,m$, $z_i$ can recover
$X_i$, and thus the multiple-unicast network $\net_1$ is 
$\bl$-length vector linear solvable over $F$. This completes the proof of Part $(i)$.\\

\textbf{Proof of part $(ii)$:} \ Now we consider the case
 when nodes are allowed to do non-linear network coding, 
i.e., nodes can send any function of the incoming symbols 
on an outgoing edge. For the forward part, as in part (i), a code
for $\net_1$ can be extended to a code for $\net_2$.

Now we prove the ``only if'' part. Let us consider any network code for
$\net_2$ over $G$.
Without loss of generality, we assume
\begin{eqnarray}
Y_{(s_j, u_i)} & = & X_j
\mbox{ for } 1\leq i \leq m, 1\leq j \leq m+1, j \neq i, \nonumber\\
Y_{(s_i,w_i)} & = & Y_{(s_i, t_{R_i})} =  X_i \mbox{ for } 1\leq i \leq m, \nonumber \\
Y_{(v_i, t_{L_i})} & = & Y_{(v_i, t_{R_i})} = Y_{(u_i, v_i)} \mbox{ for } 1\leq i \leq m. \nonumber
\end{eqnarray}

Further, we assume that
\begin{subequations}\label{eq:nonlinear_code1}
\begin{eqnarray}
Y_{(u_i, v_i)} & = & f_{1}^{i}(X_1,\cdots,X_{i-1}, 
X_{i+1},\cdots,X_m,X_{m+1}) \nonumber \mbox{ for } 1\leq i \leq m, \nonumber \\
%\label{nonlinear_code4}\\
Y_{(z_i, t_{L_i})} & = & f_{2}^{i}(X_1,\cdots,X_{i-1}, 
X_{i},X_{i+1},\cdots,X_m) \nonumber \mbox{ for } 1\leq i \leq m, \nonumber 
%\label{nonlinear_code5} 
\end{eqnarray}
\end{subequations}

and the decoded symbols at the terminals are 
\begin{subequations}\label{eq:nonlinear_decode}
\begin{eqnarray}
R_{R_i} & = & g_1^{i}(Y_{(v_i, t_{R_i})},Y_{(s_i, t_{R_i})}), \label{nonlinear_code7}\\
R_{L_i} & = & g_2^{i}(Y_{(v_i, t_{L_i})},Y_{(z_i, t_{L_i})}). \label{nonlinear_code6} 
\end{eqnarray}
\end{subequations}
Here all the symbols carried by the links are symbols from $G$.

We need to show that for every $i=1,
2,\ldots,m$, communicating the sum of the source symbols to the terminals
$t_{L_i}$ and $t_{R_i}$ is possible only if $f_2^{i}$ is an $1-1$ function
only of the symbol $X_i$ and is independent of the other variables.

By $(\ref{eq:nonlinear_decode})$, for every $i=1,2,\ldots,m$, the functions
$f_1^{i}$, $f_2^{i}$, $g_1^{i}$ and $g_2^{i}$ must satisfy the following conditions.
\begin{subequations}\label{eq:nonlinear_decode1}
\begin{eqnarray}
g_{1}^{i}(f_1^{i},X_i) & = & X_1+\cdots+X_{i-1}+X_{i}+X_{i+1}+\cdots +X_m +X_{m+1}, \label{nonlinear_code9} \\
g_{2}^{i}(f_1^{i},f_2^{i}) & = & X_1+\cdots+X_{i-1}+X_{i}+X_{i+1}+\cdots +X_m +X_{m+1}. \label{nonlinear_code8}
\end{eqnarray}
\end{subequations}

Now we prove the following claims for the functions
$g_{1}^{i}, g_{2}^{i}, f_1^{i}, f_2^{i}$ for $1 \leq i \leq m$.
\begin{claim}
\label{clm:f_1_bijective}
For every $i=1,2,\ldots,m$, $f_1^{i}$ is bijective on each 
variable $X_j, \mbox{ for } 1\leq j \leq m+1, j \neq i$, 
for any fixed values of the other variables.
\end{claim}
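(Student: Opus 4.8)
The plan is to fix an index $i$ and a variable $X_j$ with $j\neq i$, freeze all the other source symbols $X_l$ ($l\neq i,j$, including $l=m+1$) at arbitrary constant values, and show that the map $x\mapsto f_1^{i}(\ldots,x,\ldots)$ (with $x$ placed in the $X_j$-slot) is injective; surjectivity onto $G$ then follows automatically since $G$ is finite and the map is from $G$ to $G$. The key leverage is Eq.~(\ref{nonlinear_code9}): for terminal $t_{R_i}$ the decoder $g_1^{i}$ takes only the two inputs $Y_{(v_i,t_{R_i})}=f_1^{i}$ and $Y_{(s_i,t_{R_i})}=X_i$, and must output the full sum $X_1+\cdots+X_{m+1}$. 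First I would note that along the edge $(s_i,t_{R_i})$ only $X_i$ travels, so when we vary $X_j$ with $j\neq i$ the value $X_i$ fed to $g_1^{i}$ does not change; hence all of the dependence of the right-hand side of (\ref{nonlinear_code9}) on $X_j$ must be funneled through the single argument $f_1^{i}$.

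The main step is then the following. Suppose, for contradiction, that for two distinct values $x\neq x'$ of $X_j$ (other variables fixed as above) we have $f_1^{i}(\ldots,x,\ldots)=f_1^{i}(\ldots,x',\ldots)$. Then the two input pairs to $g_1^{i}$ are identical — the first coordinate is equal by assumption, the second coordinate is $X_i$ in both cases — so $g_1^{i}$ returns the same value in both cases. But by (\ref{nonlinear_code9}) that common value equals both $\big(\sum_{l} X_l\big)\big|_{X_j=x}$ and $\big(\sum_l X_l\big)\big|_{X_j=x'}$, and these two sums differ by $x-x'\neq 0$ in the abelian group $G$, a contradiction. Therefore $f_1^{i}$ is injective in $X_j$, and by finiteness of $G$ it is bijective in $X_j$ when the remaining variables are held fixed. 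Since $i$ and $j\neq i$ were arbitrary, the claim follows.

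I do not expect a serious obstacle here; the whole argument is the standard ``a decoder that sees only a bottleneck summary of a variable forces that summary to be injective'' trick, and the only things one must be careful about are (a) that the edge $(s_i,t_{R_i})$ genuinely carries $X_i$ alone (which is recorded in the normalization assumptions preceding (\ref{eq:nonlinear_code1})) so that varying $X_j$ leaves the second argument of $g_1^{i}$ untouched, and (b) that $j=m+1$ is allowed, which it is since $s_{m+1}$ is connected to every $u_i$ and hence $X_{m+1}$ is among the arguments of $f_1^{i}$. If anything needs extra care it is merely the bookkeeping of which variables are ``the other variables,'' but that is routine.
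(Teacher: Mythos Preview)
Your argument is correct and is essentially the paper's own proof, just spelled out in more detail: the paper simply observes that (\ref{nonlinear_code9}) forces $g_1^{i}(f_1^{i}(\cdot,X_j,\cdot),\cdot)$ to be bijective in $X_j$ for fixed values of the other variables, and hence $f_1^{i}$ must be as well. Your contradiction formulation and explicit appeal to finiteness of $G$ for surjectivity are fine elaborations of exactly this reasoning; the only minor wording issue is that when you freeze ``$X_l$ for $l\neq i,j$'' you should also make explicit that $X_i$ is held fixed (you clearly use this later), but this is bookkeeping rather than a gap.
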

\textit{\ \ \ \ Proof:} \ 
Let us consider any $j \neq i$. For any fixed values of 
$\{X_k|k\neq j\}$, $(\ref{nonlinear_code9})$ implies that
$g_1^{i}(f_1^i(\cdot ,X_j, \cdot), \cdot)$ is 
a bijective function of $X_j$. This in turn implies that
$f_1^{i}$ is a bijective function of $X_j$ for any fixed values of the other
variables. 

\begin{claim}
\label{clm:g_1_bijective}
For every $i=1,2,\ldots,m$, $g_1^{i}(\cdot , \cdot)$ is bijective on each argument
for any fixed value of the other argument.
\end{claim}
\textit{\ \ \ \ Proof:} \
For any element of $G$, by claim 1, there exists
a set of values for $\{X_j|j\neq i\}$ so that the first argument
$f_1^i(\cdot)$ of $g_1^i$ takes that value.
For such a set of fixed values of $\{X_j|j\neq i\}$,
$g_1^i(\cdot ,X_i)$ is a bijective function of $X_i$ by (\ref{nonlinear_code9}).
Now, consider any $j\neq i$ and fix some values for $\{X_k|k\neq j\}$.
Again by (\ref{nonlinear_code9}), $g_1^i(f_1^i(\cdot ,X_j,\cdot ),\cdot)$
is a bijective function of $X_j$. This implies that $g_1^i$ is a bijective
function of its first argument for any fixed value of the second argument.

\begin{claim}
\label{clm:f_1_symmetric}
For every $i=1,2,\ldots,m$, $f_1^{i}$ is symmetric, i.e., interchanging the 
values of any two variables in its arguments does not change the value of
$f_1^{i}$.  
\end{claim}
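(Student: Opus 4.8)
The plan is to exploit the fact that the right-hand side of (\ref{nonlinear_code9}), namely the sum $X_1 + \cdots + X_{m+1}$, is itself a symmetric function of the variables (it is symmetric precisely because $G$ is \emph{abelian}), while $g_1^{i}$ has already been shown to be a bijection in its first argument. Fix an index $i$ and pick any two distinct indices $j,k \in \{1,\dots,m+1\}\setminus\{i\}$. First I would freeze the value of $X_i$ to an arbitrary constant $c$, and freeze each remaining variable $X_l$ with $l \notin \{i,j,k\}$ to arbitrary constants. With these values fixed, (\ref{nonlinear_code9}) reads $g_1^{i}\bigl(f_1^{i}(\dots,X_j,\dots,X_k,\dots),c\bigr) = X_j + X_k + C$, where $C\in G$ is a constant depending only on the frozen values. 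The key observation is that $X_j + X_k + C$ is unchanged when we swap the values assigned to $X_j$ and $X_k$.

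Next I would invoke the injectivity of $g_1^{i}$ in its first argument, which is exactly the previous claim (Claim~\ref{clm:g_1_bijective}): since $g_1^{i}(\,\cdot\,,c)$ is one-to-one and the two evaluations $g_1^{i}\bigl(f_1^{i}(\dots,a,\dots,b,\dots),c\bigr)$ and $g_1^{i}\bigl(f_1^{i}(\dots,b,\dots,a,\dots),c\bigr)$ are equal (both equal $a+b+C$), we conclude $f_1^{i}(\dots,a,\dots,b,\dots) = f_1^{i}(\dots,b,\dots,a,\dots)$ for all $a,b\in G$. Since the frozen values of $X_i$ and of the other $X_l$ were arbitrary, this says precisely that interchanging the values in the slots corresponding to $X_j$ and $X_k$ does not change $f_1^{i}$. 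As $j,k$ were an arbitrary pair of indices different from $i$, $f_1^{i}$ is symmetric.

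I do not anticipate a genuine obstacle here; the argument is a short consequence of the two preceding claims. The only points requiring a little care are bookkeeping ones: making sure that $X_i$ is legitimately a free variable that may be frozen (it is, since it is a source symbol that does not appear among the arguments of $f_1^{i}$), that the injectivity used is exactly the one established in Claim~\ref{clm:g_1_bijective}, and that commutativity of $G$ is what renders the right-hand side of (\ref{nonlinear_code9}) invariant under the swap. In fact the very same computation yields the stronger statement that $f_1^{i}$ depends on its arguments only through their sum $\sum_{l \neq i} X_l$, but only symmetry is needed for the sequel.
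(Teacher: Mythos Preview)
Your proposal is correct and follows essentially the same route as the paper's proof: fix $X_i$, swap two of the remaining variables, observe that the right-hand side of (\ref{nonlinear_code9}) is unchanged since $G$ is abelian, and then invoke the injectivity of $g_1^{i}$ in its first argument (Claim~\ref{clm:g_1_bijective}) to conclude that $f_1^{i}$ is unchanged. Your additional remark that $f_1^{i}$ in fact depends only on the sum $\sum_{l\neq i} X_l$ is a pleasant strengthening, though only the symmetry is used downstream.
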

\textit{\ \ \ \ Proof:} \
For some fixed values of all the arguments of $f_1^{i}$, suppose the value
of $f_1^{i}$ is $c_1$. We also fix the
value of $X_i$ as $c_2$. Suppose $g_1^i(c_1,c_2)=c_3$. Now we interchange the
values of the variables $X_j$ and $X_k$ where $j \neq i \neq k$.
Then it follows from Claim \ref{clm:g_1_bijective} and (\ref{nonlinear_code9}) that the value of $f_1^{i}$ must remain the same. 

\begin{claim}
\label{clm:f_2_bijective}
For every $i=1,2,\ldots,m$, $f_2^{i}$ is a bijective function of $X_i$
for any fixed values of $\{X_j|j=1,2,\ldots, m, j\neq i\}$.
\end{claim}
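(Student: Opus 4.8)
The plan is to exploit equation (\ref{nonlinear_code8}), namely $g_2^i(f_1^i, f_2^i) = X_1 + \cdots + X_m + X_{m+1}$, together with the structural facts about $f_1^i$ established in Claims \ref{clm:f_1_bijective}--\ref{clm:f_1_symmetric}. First I would fix all the variables $\{X_j \mid 1 \leq j \leq m+1,\ j \neq i\}$ at arbitrary values; then the right-hand side of (\ref{nonlinear_code8}), viewed as a function of $X_i$ alone, is a bijection of $X_i$ onto $G$ (it is just $X_i$ plus a constant). Also, since $f_1^i$ does not depend on $X_i$ at all (its arguments are exactly $\{X_j \mid j \neq i\}$), the first argument of $g_2^i$ is frozen at some constant $c_1 = f_1^i(\cdot)$ as $X_i$ varies. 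Hence $g_2^i(c_1, f_2^i(\cdot, X_i, \cdot))$ is a bijection of $X_i$ onto $G$.

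Next I would conclude that, as $X_i$ ranges over $G$ with the other variables held fixed, $f_2^i$ must take distinct values for distinct $X_i$: if $f_2^i$ took the same value for two different choices of $X_i$, then $g_2^i(c_1, \cdot)$ composed with $f_2^i$ would also agree on those two choices, contradicting injectivity of the map $X_i \mapsto X_1 + \cdots + X_{m+1}$. So $f_2^i(\cdot, X_i, \cdot)$ is injective in $X_i$. Since $G$ is finite, injectivity gives surjectivity, so $f_2^i$ is a bijective function of $X_i$ for any fixed values of the remaining variables, which is exactly the claim.

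The only mild subtlety — and the step I would be most careful about — is making sure $f_1^i$ genuinely has no dependence on $X_i$, so that it remains constant while $X_i$ varies. This is immediate from the way (\ref{nonlinear_code4}) was written (the argument list of $f_1^i$ is $X_i,\ldots,X_{i-1},X_{i+1},\ldots,X_{m+1}$, which lists all sources \emph{except} $X_i$, reflecting that the only path from $s_i$ to $u_i$ in $\net_2$ is absent); once this is noted, the rest is the elementary finite-set argument above. Note that this claim, unlike Claims \ref{clm:f_1_bijective}--\ref{clm:f_1_symmetric}, does not yet assert that $f_2^i$ is \emph{independent} of the other variables $X_j$ ($j \neq i$); that stronger statement will presumably require a further argument combining this bijectivity with (\ref{nonlinear_code8}) and the symmetry of $f_1^i$, but it is not part of the present claim.
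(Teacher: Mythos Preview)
Your proposal is correct and follows exactly the same approach as the paper's proof: fix all variables other than $X_i$ (including $X_{m+1}$), observe that $f_1^i$ is then constant since it has no $X_i$-dependence, and use equation~(\ref{nonlinear_code8}) to conclude that $X_i \mapsto g_2^i(c_1, f_2^i(\cdot, X_i, \cdot))$ is a bijection, forcing $f_2^i$ to be bijective in $X_i$ by finiteness of $G$. The paper's proof is simply a terser version of what you wrote; note also that Claims~\ref{clm:f_1_bijective}--\ref{clm:f_1_symmetric} are not actually needed here (only the structural fact that $f_1^i$ omits $X_i$), which your detailed argument correctly reflects despite your opening sentence.
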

\textit{\ \ \ \ Proof:} \
For any fixed values of $\{X_j|j=1,2,\ldots, m, j\neq i\}$ and $X_{m+1}$,
$g_2^{i}(\cdot,f_2^i(\cdot,X_i,\cdot))$
is a bijective function of $X_i$ by equation $(\ref{nonlinear_code8})$.
This implies that $f_2^{i}$ is a bijective function of $X_i$ for
any fixed values of the other arguments.

\begin{claim}
\label{clm:g_2_bijective}
For every $i=1,2,\ldots,m$, $g_2^{i}(\cdot ,\cdot)$ is bijective on each argument
for any fixed value of the other argument.
\end{claim}

\textit{\ \ Proof:} \ 
By equation (\ref{nonlinear_code8}), $g_2^i(f_1^i(\cdot ,X_{m+1}),f_2^i(\cdot ))$ and
$g_2^i(f_1^i(\cdot),f_2^i(\cdot ,X_i, \cdot))$ are both bijective functions of $X_{m+1}$
and $X_i$ respectively for any fixed values of the omitted variables.
This implies that $g_2^i$ is a bijective function of the first and the
second argument for the other argument fixed.

Now to prove that for every $i=1,2,\ldots,m$, the value of
$f_2^{i}(X_1,\cdots, X_m)$ does not depend on $\{X_j|j=1,\ldots, m;j\neq i\}$,
it is sufficient to prove that for any set of fixed values 
$X_1=a_1,\ldots, X_m=a_m$, changing the value of $X_j$ ($j\neq i$)
to any $b_j \in G$ does not change the value of $f_2^i$. Let us assign $X_{m+1} = b_j$.
By (\ref{nonlinear_code8}), the value of $g_2^i$ does not change
by interchanging the values of $X_j$ and $X_{m+1}$. Also, the value
of $f_1^i$ does not change by this interchange by Claim \ref{clm:f_1_symmetric}.
So, by
Claim \ref{clm:g_2_bijective}, the value of $f_2^i$ also does not
change by this change of value of $X_j$ from $a_j$ to $b_j$.

Now using Claim \ref{clm:f_2_bijective}, it follows that $f_2^i$ is a bijective function
of only the variable $X_i$. This completes the proof of part (ii)
of the theorem. It is interesting to note that a somewhat similar technique
was independently used in \cite{langberg3} to prove that there is no non-linear solution
for the network $\spc_3$ (Fig.~\ref{fig:spc3}).

\textbf{Proof of part $(iii)$:} \ 
In the same way as in the proof of part (i), a $\bl$-length vector linear 
code of $\net_1^{\prime}$ can be extended to a $\bl$-length vector linear
code of $\net_2^{\prime}$.
Now we prove the ``only if'' part. Let us consider a $\bl$-length vector
linear code of $\net_2^{\prime}$. 
We assume that
for every $i=1,2,\ldots,m$, the edge $(w_i,s_i)$ carries a
linear combination
\begin{eqnarray}
Y_{(w_i,s_i)}=\mathop{\sum_{j=1}^{m}}
\beta_{i,j} X_{L_j}, \nonumber
%\label{reverse_code7}
\end{eqnarray}
where $\beta_{i,j} \in F^{\bl\times \bl}$.

Without loss of generality, we assume that
\begin{eqnarray}
& Y_{(u_j, s_i)} = & Y_{(v_j, u_j)} \mbox{ for } 1\leq j \leq m,
1\leq i \leq m+1, j\neq i, \nonumber \\
& Y_{(t_{L_j}, v_j)} = & Y_{(t_{L_j}, z_j)}  =  X_{L_j}
\mbox{ for } 1\leq j \leq m, \nonumber\\
%\label{reverse_code1} \\
\text{ and } & Y_{(t_{R_j}, s_j)}  = & Y_{(t_{R_j}, v_j)}  =  X_{R_j}
\mbox{ for } 1\leq j \leq m. \nonumber 
%\label{reverse_code2}
\end{eqnarray}

We further assume that
 \begin{subequations}\label{eq:reverse_code}
 \begin{eqnarray}
Y_{(v_j, u_j)} & = & \beta_{L,j}Y_{(t_{L_j}, v_j)}+ \beta_{R,j}Y_{(t_{R_j}, v_j)} \mbox{ for } 1\leq j \leq m. \label{reverse_code4}
\end{eqnarray}
\end{subequations}
and that for $1\leq i\leq m+1$, the decoded symbols $R_i$ at the
terminals $s_i$ are 
\begin{subequations}\label{eq:reverse_decode}
\begin{eqnarray}
R_{i} & = & \mathop{\sum_{j=1}^{m}}_{j\neq i}\gamma_{j,i} Y_{(u_j, s_i)}
+ \gamma_{i,i} Y_{(t_{R_i}, s_i)} 
+ \gamma_{i} Y_{(w_i, s_i)} \nonumber  1\leq i \leq m, \label{reverse_code8} \\
\text{ and } R_{m+1} & = & \mathop{\sum_{j=1}^{m}}\gamma_{j,m+1} Y_{(u_j, s_{m+1})}.
\label{reverse_code9}
\end{eqnarray}
\end{subequations}

Here all the coding coefficients and decoding coefficients 
% $\alpha_{L,j}, \alpha_{R,j}, \alpha_{i},
% \beta_{i},, \beta_{L,j},$\\ $\beta_{R,j},\gamma_{j,i}, \gamma_{i,i}, \gamma_{i}, 
% \gamma_{j,m+1}$ 
are $\bl \times \bl$ matrices over $F$, and the message vectors $Y_{(.,.)}$
carried by the links are $\bl$-length vectors
over $F$.

By $(\ref{eq:reverse_code})$ and 
$(\ref{eq:reverse_decode})$, we have 
\begin{subequations}\label{eq:reverse_decode2}
\begin{eqnarray}
R_i & = & \mathop{\sum_{j=1}^{m}}_{j\neq i} \gamma_{j,i} (\beta_{L,j} X_{L_j}+\beta_{R,j} X_{R_j}) 
 + \gamma_{i,i} X_{R_i} + \gamma_{i}\mathop{\sum_{j=1}^{m}} \beta_{i,j} X_{L_j} 
 \mbox{ for } 1\leq i\leq m \label{reverse_code11}, \\
\mbox{and } R_{m+1} & = & \mathop{\sum_{j=1}^{m}} \gamma_{j,m+1}(\beta_{L,j} X_{L_j} + \beta_{R,j} X_{R_j}).\label{reverse_code12}
\end{eqnarray}
\end{subequations}

By assumption, for every $i=1,2,\ldots,m+1$, 
\begin{eqnarray}
&& R_i=\sum_{j=1}^{m} (X_{L_j}+X_{R_j}).\label{eq:sum}
\end{eqnarray}

By (\ref{eq:reverse_decode2}) and (\ref{eq:sum}), we have
\begin{subequations}\label{eq:reverse_decode3}
\begin{eqnarray}
&& \gamma_{i,i} =  I \mbox{ for } 1\leq i \leq m, \label{reverse_code13} \\
&& \gamma_{j,m+1}\beta_{L,j} = \gamma_{j,m+1}\beta_{R,j} = I \mbox{ for }1\leq j \leq m, \label{reverse_code14}\\
&& \gamma_{j,i} \beta_{R,j} = I \mbox{ for } 1\leq i,j \leq m, j\neq i, \label{reverse_code15}\\
&& \gamma_{j,i} \beta_{L,j}+\gamma_i \beta_{i,j} = I \mbox{ for } 1\leq i,j \leq m, j\neq i, \label{reverse_code16}\\
&& \gamma_i \beta_{i,i} = I \mbox{ for } 1 \leq i \leq m.\label{reverse_code17}
\end{eqnarray}
\end{subequations}

All the coding matrices in (\ref{reverse_code13}), 
(\ref{reverse_code14}), (\ref{reverse_code15}) and  
(\ref{reverse_code17}) are invertible since the right hand side of the
equations are the identity matrix. Equation (\ref{reverse_code14}) implies
that $\beta_{L,j} = \beta_{R,j}$ for $1\leq j\leq m$. By this, (\ref{reverse_code15}) implies $\gamma_{j,i} \beta_{L,j} = I$. Then (\ref{reverse_code16}) 
implies
\begin{eqnarray}
\gamma_i \beta_{i,j} = {\bf 0} \ \mbox{ for } 1\leq i,j \leq m, j\neq i, \nonumber 
% \label{reverse_code18}
\end{eqnarray}
where ${\bf 0}$ is the $\bl \times \bl$ all-zeros matrix.

Since $\gamma_i$ is an invertible matrix for $1 \leq i \leq m$ by (\ref{reverse_code17}), we have 
\begin{eqnarray}
\beta_{i,j} = {\bf 0} \ \mbox{ for } 1\leq i,j \leq m, j\neq i. \nonumber 
%\label{reverse_code19}
\end{eqnarray}

Also from (\ref{reverse_code17}), $\beta_{i,i}$ is invertible. So, for every $i=1,2,\ldots,m$, the edge $(w_i,s_i)$ carries a nonzero scalar multiple
of $X_{L_i}$, which implies that the reverse multiple-unicast 
network $\net_1^{\prime}$ is solvable over $F$.
This completes the proof of part (iii).

\textbf{Proof of part $(iv)$:} 
First, similar to part (iii), any code over $G$ for $\net_1^\prime$
can be extended to a code over $G$ for $\net_2^\prime$.

Now we prove the converse. Consider any network code over $G$
for $\net_2^\prime$ where 
the terminal $s_i$ computes 
\begin{eqnarray}
&& R_i = \sum_{i=1}^m (X_{L_i} + X_{R_i}) \hspace{2mm} \mbox{for }
i=1,2,\ldots, m+1. \nonumber 
%\label{reverse_nonlinear_code8} 
\end{eqnarray}

Without loss of generality, we assume that
\begin{subequations}\label{eq:reverse_nonlinear_code}
\begin{eqnarray}
&&Y_{(u_i, s_j)} = Y_{(v_i, u_i)} \mbox{ for } 1\leq i \leq m, 
\,\,\,1\leq j \leq m+1, i \neq j. \nonumber \\
&& Y_{(t_{R_i}, v_i)}  = Y_{(t_{R_i}, s_i)}  =  X_{R_i} \mbox{ for } 1\leq i \leq m, \label{reverse_nonlinear_code0} \nonumber \\
&& Y_{(t_{L_i}, z_i)}  = Y_{(t_{L_i}, v_i)}  =  X_{L_i} \mbox{ for } 1\leq i \leq m, \nonumber 
%\label{reverse_nonlinear_code1} 
\end{eqnarray}

We further assume that
\begin{eqnarray}
Y_{(v_i, u_i)} & = & f_i(X_{L_i},X_{R_i}) \mbox{ for } 1\leq i \leq m, \nonumber \\
%\label{reverse_nonlinear_code3} \\
Y_{(w_i, s_i)} & = & f_i^{\prime}(X_{L_1},X_{L_2},\cdots,X_{L_m}), 
\mbox{ for } 1\leq i \leq m, \nonumber 
%\label{reverse_nonlinear_code4}
\end{eqnarray}
\end{subequations}

and the decoded symbols are
\begin{subequations}\label{eq:reverse_nonlinear_decode}
\begin{eqnarray}
& R_i  & = g_{i}(Y_{(w_i,s_i)},Y_{(u_1, s_i)}, \cdots, Y_{(u_{i-1},s_{i})}, Y_{(u_{i+1},s_{i})}, \cdots, Y_{(u_{m},s_{i})}, Y_{(t_{R_i},s_i)})  
\label{reverse_nonlinear_code6}
\end{eqnarray}
for $1\leq i \leq m$; and
\begin{eqnarray}
& R_{m+1} & = g_{m+1}(Y_{(u_1, s_{m+1})}, \cdots, Y_{(u_{m},s_{m+1}))}. \label{reverse_nonlinear_code7}
\end{eqnarray}
\end{subequations}

Now we state some claims which can be proved using similar arguments
as in the proof of part (ii) of the theorem. We omit the proof of these
claims.

1. As a function of the variables $X_{L_i}, X_{R_i}$; $i=1,2,\ldots, m$,
$g_{m+1}$ is bijective in each variable for fixed values of the other
variables.

2. The function $g_{m+1}$ is bijective in each variable $Y_{(u_i, s_{m+1})}$ for fixed
values of the other variables.

3. For $i=1,2,\ldots, m$, $f_i(X_{L_i}, X_{R_i})$ is a bijective function
of each variable for a fixed value of the other variable.

4. For $i=1,2,\ldots, m$, $f_i(X_{L_i}, X_{R_i})$ is symmetric on its
arguments.

5. For $i=1,2,\ldots, m$, $g_i$ is bijective on each of its arguments
for fixed values of the other arguments.

First by (\ref{reverse_nonlinear_code6}), $g_i(f_i^\prime (\cdot,X_i, \cdot), \cdots)$, and
thus $f_i^\prime (\cdot,X_i, \cdot)$, is a bijective function of $X_i$ for
any fixed values of the other variables.
Now we prove that $f_i^{\prime}$ is a function of only $X_{L_i}$,
and it is independent of the other variables. Fix a $k\neq i$.
It is sufficient
to prove that for any fixed values of $X_{L_j}, j=1,2,\ldots, m, j\neq k$,
the value of $f_i^{\prime}$ does not change if the value of $X_{L_k}$ is 
changed from, say, $a$ to $b$.
Let us fix $X_{L_k}=a$. Let us further fix $X_{R_k}=b$ and the variables
$X_{R_j}$ for $j\neq k$ to arbitrary values. Now, by interchanging
the values of $X_{L_k}$ and $X_{R_k}$, the value of $g_i$ does not
change, since the sum of the variables does not change. Further,
all the arguments of $g_i$ other than $f_i^\prime$ does not change
since $f_k$ is symmetric on its arguments. So by claim 5, the
value of $f_i^\prime$ also does not change by this interchange. But
this means that the value of $f_i^\prime$  does not change by the
change of the value of $X_{L_k}$ from $a$ to $b$. This completes the proof of
part (iv).

\section{Proof of Theorem \ref{thm:main}}
\label{app1}

Define $m = p_1p_2\ldots p_l + 2$, where the empty product is defined
to be $1$.
We prove that the network $\spc_m$ satisfies the condition in the theorem.
First we prove that if, for any $n$, it is possible to communicate the
sum of the sources by $n$-length vector linear network coding over a 
field $F$ to all the terminals in $\spc_m$, then the characteristic of $\F$
must be from $\cP$. For $i = 1,2,\ldots, m$, the message vector generated
by the source $s_i$ is denoted by $X_i \in F^\bl$.

Without loss of generality, we assume that
\begin{subequations}\label{eq:smcode}
\begin{eqnarray}
Y_{(s_i, t_j)} & = & X_i 
\mbox{ for } 1\leq i,j \leq m-1, i\neq j,\nonumber \\
Y_{(s_i, u_i)} & = & X_i, 
\mbox{ for } 1\leq i \leq m-1,\nonumber \\
 Y_{(s_m, u_i)}& = & X_m 
\mbox{ for } 1\leq i \leq m-1,\nonumber \\
\text{and } Y_{(v_i, t_i)} & = & Y_{(v_i, t_m)} = Y_{(u_i, v_i)} \mbox{ for } 1\leq i \leq m-1. \nonumber
\end{eqnarray}
Let us further assume that
\begin{eqnarray}
Y_{(u_i, v_i)} & = & \beta_{i,1} Y_{(s_i, u_i)} + \beta_{i,2} Y_{(s_m, u_i)} \mbox{ for } 1\leq i \leq m-1\label{smcode4} 
\end{eqnarray}
\end{subequations}
and the vectors computed at the terminals are 
\begin{subequations}\label{eq:smdecode}
\begin{eqnarray}
R_i & = & \mathop{\sum_{j=1}^{m-1}}_{j\neq i} \gamma_{j,i} Y_{(s_j, t_i)}
+ \gamma_{i,i} Y_{(v_i, t_i)} \mbox{ for } 1\leq i \leq m-1,\label{smcode5} \\
\text{ and }R_m & = & \sum_{j=1}^{m-1} \gamma_{j,m} Y_{(v_j, t_m)}. \label{smcode6}
\end{eqnarray}
\end{subequations}
Here all the coding and decoding coefficients $\beta_{i,j},
\gamma_{i,j}$ are $\bl \times \bl$ matrices over $\F$, and the message vectors $X_i$
and the messages $Y_{(.,.)}$ carried by the links are length-$\bl$ vectors
over $\F$.

%\end{proof}
By assumption, each terminal decodes the sum of all the sources. That is,
\begin{eqnarray}
R_i = \sum_{j=1}^m X_j \mbox{ for } i=1,2,\ldots, m.  \label{eq:sum1}
\end{eqnarray}
 
From (\ref{eq:smcode}) and (\ref{eq:smdecode}), we have
\begin{eqnarray}
R_i = && \mathop{\sum_{j=1}^{m-1}}_{j\neq i}\gamma_{j,i}X_j
+ \gamma_{i,i}\beta_{i,1} X_i + \gamma_{i,i}\beta_{i,2}X_m \,\, \text{ for } i=1,2,\ldots, m-1 \label{eq:decode1}
\end{eqnarray}
and 
\begin{equation}
R_m  =  \sum_{j=1}^{m-1}\gamma_{j,m}\beta_{j,1}X_j
+ \left(\sum_{j=1}^{m-1}\gamma_{j,m}\beta_{j,2}\right) X_m. \label{eq:decode2}
\end{equation}
Since (\ref{eq:sum1}) is true for all values of $X_1, X_2, \ldots, X_m \in \F^\bl$,
(\ref{eq:decode1}) and (\ref{eq:decode2}) imply
\begin{eqnarray}
&&\gamma_{j,i} = \iden \mbox{ for } 1\leq i,j \leq m-1, i\neq j, \label{sol1} \\
&&\gamma_{i,i} \beta_{i,1} = \iden \mbox{ for } 1\leq i\leq m-1, \label{sol2} \\
&&\gamma_{i,i}\beta_{i,2} = \iden \mbox{ for } 1\leq i\leq m-1,  \label{sol3} \\
&&\gamma_{i,m}\beta_{i,1} = \iden \mbox{ for } 1\leq i\leq m-1,  \label{sol4} \\
&&\sum_{i=1}^{m-1} \gamma_{i,m}\beta_{i,2} = \iden,  \label{sol5} 
\end{eqnarray}
where $\iden$ denotes the $\bl \times \bl$ identity matrix over $\F$.
All the coding matrices in (\ref{sol1})--(\ref{sol4}) are invertible since the right hand side of the equations
are the identity matrix. Equations (\ref{sol2}) and (\ref{sol3}) imply
$\beta_{i,1} = \beta_{i,2}$ for $1\leq i\leq m-1$.
So, (\ref{sol5}) gives
\begin{eqnarray}
\sum_{i=1}^{m-1} \gamma_{i,m}\beta_{i,1} = \iden. \nonumber 
%\label{sol6} 
\end{eqnarray}
Now, using (\ref{sol4}), we get
\begin{eqnarray}
&& \sum_{i=1}^{m-1} \iden = \iden,  \nonumber \\
&\Rightarrow & (m-1) \iden = \iden, \nonumber \\
&\Rightarrow & (m-2) \iden = {\bf 0}. \nonumber
\end{eqnarray}
This is true if and only if the characteristic of $\F$ divides
$m-2$. So, the sum of the sources can be communicated in $\spc_m$ by
vector linear network coding only if the characteristic of $\F$
belongs to $\cP$.

Now, if the characteristic of $\F$ belongs to $\cP$, then
for any block length $\bl$, in particular for scalar network coding for
$\bl = 1$, every coding matrix in (\ref{smcode4}) and (\ref{eq:smdecode})
can be chosen
to be the identity matrix. The terminals then recover the sum of
the sources.  This completes the proof.

%\appendix
\section{Proof of Theorem \ref{thm:main_cofinite}}
\label{app2}

Consider the network $\spc_m^*$ shown in Fig. \ref{fig:spcm*} for
$m = p_1p_2\ldots p_l + 2$. We will show that this
network satisfies the condition of the theorem.
We first prove that if it is possible to communicate the sum of the
source messages using vector linear network coding over a field
$\F$ to all the terminals in $\spc_m^{*}$, then the characteristic
of $\F$ must not divide $m-2$.
For $i = 1,2,\ldots, m$, let the message vector generated by the
source $s_i$ be denoted by $X_i \in \F^\bl$. Each terminal $t_i$
computes a linear combination $R_i$ of the received vectors.

Without loss of generality, let us assume that
\begin{subequations}\label{eq:smcode_new}
\begin{eqnarray}
Y_{(s_i, t_i)} & = &  X_i
\hspace*{0mm} \mbox{ for } 1\leq i \leq m-1,\label{code1_new} \\
Y_{(s_i, u_j)} & = &  X_i 
\hspace*{0mm} \mbox{ for } 1\leq i,j \leq m-1, i\neq j,\label{code2_new}\\
\text{ and } Y_{(v_i, t_i)} & = & Y_{(v_i, t_m)} = Y_{(u_i, v_i)} \mbox{ for } 1\leq i \leq m-1 .
\end{eqnarray}
Further, suppose
\begin{eqnarray}
Y_{(u_i, v_i)} & = & \mathop{\sum_{j=1}^{m-1}}_{j\neq i} \beta_{j,i} Y_{(s_j, u_i)}
\hspace*{0mm} \mbox{ for } 1\leq i \leq m-1,\label{code4_new}
\end{eqnarray}
\end{subequations}
\vspace*{-5mm}
\begin{subequations}\label{eq:smdecode_new}
\begin{eqnarray}
R_i & = & \gamma_{i,1} Y_{(s_i, t_i)}
+ \gamma_{i,2} Y_{(v_i, t_i)} \mbox{ for } 1\leq i \leq m-1,\label{code5_new} \\
\text{and } R_m & = & \sum_{j=1}^{m-1} \gamma_{j,m} Y_{(v_j, t_m)}. \label{code6_new}
\end{eqnarray}
\end{subequations}
Here all the coding coefficients $\beta_{i,j}, \gamma_{i,j}$ are $\bl \times \bl$ matrices over $\F$.

By assumption, each terminal decodes the sum of all the source messages. That is,
\begin{eqnarray}
R_i = \sum_{j=1}^{m-1} X_j \mbox{ for } i=1,2,\ldots, m .
\label{eq:sum1_new}
\end{eqnarray}

From (\ref{eq:smcode_new}) and (\ref{eq:smdecode_new}), we have
\begin{eqnarray}
R_i & = &  \mathop{\sum_{j=1}^{m-1}}_{j\neq i}\gamma_{i,2}\beta_{j,i}X_j+ \gamma_{i,1} X_i 
\mbox{ for } i=1,2,\ldots, m-1, \label{eq:decode1_new} \\
\text{and } R_m  &=&  \sum_{i=1}^{m-1}\gamma_{i,m}\left(\mathop{\sum_{j=1}^{m-1}}_{j\neq i}\beta_{j,i}X_j\right) \nonumber \\
& = & \sum_{j=1}^{m-1}\left(\mathop{\sum_{i=1}^{m-1}}_{i\neq j}\gamma_{i,m}\beta_{j,i}\right)X_j.\label{eq:decode2_new} 
\end{eqnarray}
Since (\ref{eq:sum1_new}) is true for all values of $X_1, X_2, \ldots, X_m \in \F^\bl$,
(\ref{eq:decode1_new}) and (\ref{eq:decode2_new}) imply
\begin{eqnarray}
&& \gamma_{i,2}\beta_{j ,i} = \iden \mbox{ for } 1\leq i, j \leq m-1, i\neq j, \label{sol1_new} \\
&& \gamma_{i,1} = \iden \mbox{ for } 1\leq i\leq m-1, \label{sol2_new} \\
&& \mathop{\sum_{i=1}^{m-1}}_{i\neq j}\gamma_{i,m}\beta_{j,i} = \iden \mbox{ for } 1\leq j\leq m-1.  \label{sol3_new}
\end{eqnarray}
All the coding matrices in (\ref{sol1_new}), (\ref{sol2_new}) are invertible since the right hand side of the equations are the identity matrix. Equation (\ref{sol1_new}) implies
$\beta_{j,i} = \beta_{k,i}$ for $1\leq i, j, k\leq m-1$, $j\neq i\neq k$.
So, let us denote all the equal co-efficients $\beta_{j,i}; 1\leq j \leq m-1,
j\neq i$ by $\beta_i$. Then (\ref{sol3_new}) can be rewritten as
\begin{eqnarray}
\mathop{\sum_{i=1}^{m-1}}_{i\neq j}\gamma_{i,m}\beta_{i} = \iden \mbox{ for }
1\leq j\leq m-1.
\label{sol4_new}
\end{eqnarray}
Equating the left hand side of (\ref{sol4_new}) for two values of
$j$ (say, $j$ and $k$), we have
\begin{eqnarray}
\gamma_{j,m}\beta_{j} = \gamma_{k,m}\beta_{k}  \mbox{ for } 1\leq j, k\leq m-1. \nonumber
\end{eqnarray}
Then (\ref{sol4_new}) gives
\begin{eqnarray}
(m-2)\gamma_{1,m}\beta_{1}  = \iden . 
  \label{sol6_new}
\end{eqnarray}

Equation (\ref{sol6_new}) implies that $(m-2)$ is invertible in $F$, that is,
the characteristic of $F$ does not devide $(m-2)$.
So the sum of the source messages can be communicated in $\spc_m^{*}$ by $n$-length vector
linear network coding over $F_q$ only if the characteristic of $\F$ does not
divide $(m-2)$.

Now, if the characteristic of $\F$ does not divide $(m-2)$, then for any block
length $\bl$,
every coding matrix in (\ref{code4_new}) and (\ref{code5_new}) can be
chosen to be the identity matrix, and $\gamma_{j,m} = (m-2)^{-1} I$ in
(\ref{code6_new}). The terminals $t_{1},t_{2},\cdots,
t_{m}$ then recover the sum of
the source messages. This completes the proof.

\begin{biographynophoto}{Brijesh Kumar Rai} (S'09-M'11)
received the B.E. degree in Electronics
and Communication Engineering from F.E.T., R.B.S College, Bichpuri,
Agra, India, in 2001, the M.Tech. degree in Electrical Engineering
from Indian Institute of Technology Kanpur, Kanpur, India, in 2004,
and the Ph.D. degree in Electrical Engineering from Indian Institute
of Technology Bombay, Mumbai, India, in 2010.

Currently, he is an Assistant Professor at the Department of
Electronics and Electrical Engineering, Indian Institute of Technology
Guwahati, Guwahati, India. From September 2010 to May 2011, he was a
Postdoctoral Fellow at Network and Computer Science Department -
INFRES, T\'{e}l\'{e}com ParisTech, Paris, France. His research interests
include information theory, communications, coding theory and network
coding.

Prof. Rai has been selected for Microsoft Young Outstanding Faculty
Award 2011-2012 at the Department of Electronics and Electrical
Engineering, Indian Institute of Technology Guwahati, Guwahati, India.
\end{biographynophoto}

\begin{biographynophoto}{Bikash Kumar Dey} (S'00-M'04)
received his B.E. degree in Electronics and Telecommunication Engineering from Bengal Engineering College, Howrah, India, in 1996. He received his M.E. degree in Signal Processing and Ph.D. in Electrical Communication Engineering from the Indian Institute of Science in 1999 and 2003
respectively.

From August 1996 to June 1997, he worked at Wipro Infotech Global R\&D. In February 2003, he joined Hellosoft India Pvt. Ltd. as a Technical Member. In June 2003, he joined the International Institute of Information Technology, Hyderabad, India, as Assistant Professor. In May 2005, he joined the Department of Electrical Engineering of Indian Institute of Technology Bombay where he works as Associate Professor. His research interests include information theory, coding theory, and network coding.

He was awarded the Prof. I.S.N. Murthy Medal from IISc as the best M.E. student in the Department of Electrical Communication Engineering and Electrical Engineering for 1998-1999 and Alumni Medal for the best Ph.D. thesis in the Division of Electrical Sciences for 2003-2004.
\end{biographynophoto}

\end{document}